\newcommand{\TiKzSetup}{
  \usepackage{tikz}
  \usetikzlibrary{shapes}

  \tikzstyle{every node} = [inner sep=0em]
  \tikzstyle{v} = [draw,circle,line width=0.11em,minimum size=0.5em]
  \tikzstyle{cv} = [v,inner sep=0.075em,minimum size=0.9em]
  
  \tikzstyle{e} = [line width=0.15ex]
  \tikzstyle{fe} = [line width=.5ex] 
  \tikzstyle{de} = [dashed,line width=0.125ex] 

  \tikzstyle{treeNode} = [circle,fill=white,inner sep=0.075em]
  \tikzstyle{leaf} = [treeNode,text height=1.5ex,text depth=.25ex]
  \tikzstyle{treeInternal} = [treeNode,draw,font=\small,
                              line width=0.15ex,minimum size=1em]
  \tikzstyle{treeChildRelation} = [line width=0.15ex]
  \tikzstyle{treeDescendantRelation} = [loosely dotted,line width=0.15ex]
}
\newcommand{\twinVertices}[1]{
  \foreach \ang/\lbl/\color/\colnum in {0/a/red!30/3,60/b/green!40/2,
                                        120/c/blue!10/1,180/d/red!30/3,
                                        240/e/blue!10/1,300/f/green!40/2}
    \node[cv,fill=\color,label=\ang:$\lbl$] (\lbl) at (\ang:1){#1\colnum};
}
\newcommand{\figTwinCfive}[3]{
  \begin{tikzpicture}[x=#1/2,y=#1/2,font=#2,]
    \twinVertices{#3}
    \draw[e] (d)--(b)--(a)--(f)--(e)--(d)--(c)--(a);
  \end{tikzpicture}
}
\newcommand{\figCoTwinCfive}[3]{
  \begin{tikzpicture}[x=#1/2,y=#1/2,font=#2,
                     ]
    \twinVertices{#3}
    \draw[e] (d)--(b)--(a)--(f)--(e)--(d)--(c)--(a);
    \draw[e] (c)--(b);
  \end{tikzpicture}
}
\newcommand{\cographVertices}{
  \foreach \x/\y/\ang/\lbl/\color/\colnum in {
           0/3/135/a/red!30/3,
           1/4/135/b/orange!30/4,
           0/1/225/c/purple!40/5,
           1/0/225/d/violet!30/6,
           3/4/45/e/blue!10/1,
           4/3/45/f/green!40/2,
           4/1/-45/g/blue!10/1,
           3/0/-45/h/green!40/2}
}
\newcommand{\cographEdges}{
  \foreach \x in {e,f,g,h}
    \foreach \y in {a,b}
      \foreach \z in {c,d}
        \draw[e] (\x)--(\y)--(\z)--(\x);
  \draw[e] (e)--(f);
  \draw[e] (g)--(h);
}
\newcommand{\coloredCographVertices}[1]{
  \cographVertices
    \node[cv,fill=\color,label=\ang:$\lbl$] (\lbl) at (\x,\y){#1\colnum};
}
\newcommand{\figCographColored}[3]{
  \begin{tikzpicture}[x=#1/5,y=#1/5,font=#2,
                     ]
    \coloredCographVertices{#3}
    \cographEdges
  \end{tikzpicture}
}
\newcommand{\figCographColoredTriangulated}[3]{
  \begin{tikzpicture}[x=#1/5,y=#1/5,font=#2,
                     ]
    \coloredCographVertices{#3}
    \cographEdges
    \draw[fe] (a)--(b);
    \draw[fe] (c)--(d);
  \end{tikzpicture}
}
\newcommand{\figCotreeCanon}[3]{
  \begin{tikzpicture}[x=#1,y=#2,line width=0.15ex,level distance=#2/3,font=#3,
                      level 1/.style={sibling distance=#1/3},
                      level 2/.style={sibling distance=#1/8},
                      level 3/.style={sibling distance=#1/8}]
    \node[treeInternal] {\One} 
      child {node[treeInternal]{\Zero} child {node[leaf]{$a$}}
                                     child {node[leaf]{$b$}}}
      child {node[treeInternal]{\Zero} child[sibling distance=#1/4] {node[treeInternal]{\One} child {node[leaf]{$e$}}
                                                                    child {node[leaf]{$f$}}}
                                     child[sibling distance=#1/4] {node[treeInternal]{\One} child {node[leaf]{$g$}}
                                                                    child {node[leaf]{$h$}}}}
      child {node[treeInternal]{\Zero} child {node[leaf]{$c$}}
                                     child {node[leaf]{$d$}}};
  \end{tikzpicture}
}
\newcommand{\figCotreeBinary}[3]{
  \begin{tikzpicture}[x=#1,y=#2,line width=0.15ex,level distance=#2/3,font=#3,
                      level 1/.style={sibling distance=#1/2},
                      level 2/.style={sibling distance=#1/4},
                      level 3/.style={sibling distance=#1/8}]
    \node[treeInternal] {\One} 
      child {node[treeInternal]{\One} child {node[treeInternal]{\Zero} child {node[leaf]{$a$}}
                                                                    child {node[leaf]{$b$}}}
                                     child {node[treeInternal]{\Zero} child {node[leaf]{$c$}}
                                                                    child {node[leaf]{$d$}}}}
      child {node[treeInternal]{\Zero} child {node[treeInternal]{\One} child {node[leaf]{$e$}}
                                                                    child {node[leaf]{$f$}}}
                                     child {node[treeInternal]{\One} child {node[leaf]{$g$}}
                                                                    child {node[leaf]{$h$}}}};
  \end{tikzpicture}
}
\newcommand{\figCotreeCanonTriangulated}[3]{
  \begin{tikzpicture}[x=#1,y=#2,line width=0.15ex,level distance=#2/3,font=#3,
                      level 1/.style={sibling distance=#1/6},
                      level 2/.style={sibling distance=#1/6},
                      level 3/.style={sibling distance=#1/12}]
    \node[treeInternal]{\One} 
      child {node[leaf]{$a$}}
      child {node[leaf]{$b$}}
      child {node[treeInternal]{\Zero} child {node[treeInternal]{\One} child {node[leaf]{$e$}}
                                                                       child {node[leaf]{$f$}}}
                                       child {node[treeInternal]{\One} child {node[leaf]{$g$}}
                                                                       child {node[leaf]{$h$}}}}
      child {node[leaf]{$c$}}
      child {node[leaf]{$d$}};
  \end{tikzpicture}
}
\newcommand{\figCotreeBinaryTriangulated}[3]{
  \begin{tikzpicture}[x=#1,y=#2,line width=0.15ex,level distance=#2/3,font=#3,
                      level 1/.style={sibling distance=#1/2},
                      level 2/.style={sibling distance=#1/4},
                      level 3/.style={sibling distance=#1/8}]
    \node[treeInternal] {\One} 
      child {node[treeInternal]{\One} child {node[treeInternal]{\One} child {node[leaf]{$a$}}
                                                                    child {node[leaf]{$b$}}}
                                     child {node[treeInternal]{\One} child {node[leaf]{$c$}}
                                                                    child {node[leaf]{$d$}}}}
      child {node[treeInternal]{\Zero} child {node[treeInternal]{\One} child {node[leaf]{$e$}}
                                                                    child {node[leaf]{$f$}}}
                                     child {node[treeInternal]{\One} child {node[leaf]{$g$}}
                                                                    child {node[leaf]{$h$}}}};
  \end{tikzpicture}
}
\newcommand{\figLemmaCotreeCfour}[3]{
  \begin{tikzpicture}[x=#1/3,y=#2/3,font=#3,]
    \node[treeInternal,label=right:{$\tau$}] (tau) at (0,2.5){\One}; 
    \draw[treeDescendantRelation] (tau)--(0,3);

    \node[treeInternal,label=left:$\alpha$] (alpha) at (-1/2,2){};
    \draw[treeChildRelation] (tau)--(alpha);
    \node[treeInternal,label=right:$\beta$] (beta) at (1/2,2){};
    \draw[treeChildRelation] (tau)--(beta);

    \node[treeInternal,label=left:$\alpha'$] (alphaprime) at (-1,1){\Zero};
    \draw[treeDescendantRelation] (alpha)--(alphaprime);
    \node[treeInternal,label=right:$\beta'$] (betaprime) at (1,1){\Zero};
    \draw[treeDescendantRelation] (beta)--(betaprime);

    \foreach \x/\nm/\lbl in {-1.5/a1/$a_1$,-0.5/a2/$a_2$,0.5/b1/$b_1$,1.5/b2/$b_2$}
      \node[leaf](\nm) at (\x,0){\lbl};
    \draw[treeDescendantRelation] (alphaprime)--(a1);
    \draw[treeDescendantRelation] (alphaprime)--(a2);
    \draw[treeDescendantRelation] (betaprime)--(b1);
    \draw[treeDescendantRelation] (betaprime)--(b2);
  \end{tikzpicture}
}
\newcommand{\figLemmaGraphCfour}[2]{
  \begin{tikzpicture}[x=#1/2,y=#1/2,font=#2,
                      text height=1.5ex,text depth=.25ex,]
    \foreach \x/\y/\nm/\lbl/\ang in {-1/1/a1/$a_1$/135,1/1/b1/$b_1$/45,1/-1/a2/$a_2$/-45,-1/-1/b2/$b_2$/-135}
      \node[v,label=\ang:\lbl] (\nm) at (\x,\y){};
    \draw[e] (a1)--(b1)--(a2)--(b2)--(a1);
  \end{tikzpicture}
}
\newcommand{\figLemmaCfour}[3]{
  \begin{tikzpicture}[x=#1,y=#2,font=#3]
    \node at (-1/5,0){\figLemmaGraphCfour{0.1#1}{#3}};
    \node at (1/5,0){\figLemmaCotreeCfour{0.25*#1}{0.175*#2}{#3}};
  \end{tikzpicture}
}
\newcommand{\paperKeywords}{acyclic coloring; star coloring; cograph; treewidth; pathwidth; triangulating colored graphs}
\newcommand{\paperTitle}{Acyclic and Star Colorings of Cographs}
\newcommand{\extendedAbstractTitle}{Acyclic and Star Colorings of Joins of Graphs and an Algorithm for Cographs}
\newcommand{\paperTitleFootnote}{
  A preliminary version of this article, entitled ``\extendedAbstractTitle'', appeared in
  \emph{Proceedings of the 8th Cologne-Twente Workshop on Graphs and Combinatorial Optimization},
  2009~\cite{Lyons09}.
}
\newtheorem{dfn}{Definition}
\newtheorem{obs}{Observation}
\newtheorem{thm}{Theorem}
\newtheorem{cor}[thm]{Corollary}
\newtheorem{lem}{Lemma}[section]
\newcommand{\disj}{\cup}
\newcommand{\Zero}{$\disj$}
\newcommand{\join}{\ast} 
\newcommand{\One}{$\join$}
\newcommand{\CotreeNode}[1]{%
  {\tikz[baseline=(X.base)]\node[treeInternal](X){#1};}%
}
\newcommand{\ZeroNode}{
  \CotreeNode{$\disj$}%
}
\newcommand{\OneNode}{
  \CotreeNode{$\join$}%
}
\DeclareMathOperator{\tw}{tw}
\DeclareMathOperator{\pw}{pw}
\newcommand{\card}[1]{\lvert#1\rvert}
\newcommand{\NP}{\mathsf{NP}}
\newcommand{\Gt}{{G^+}}
\newcommand{\Et}{E^+}
\newcommand{\Tt}{{T^+}}
\newcommand{\refdfn}[1]{{Definition~\ref{#1}}}
\newcommand{\refobs}[1]{{Observation~\ref{#1}}}
\newcommand{\refalg}[1]{{Algorithm~\ref{#1}}}
\newcommand{\reffig}[1]{{Figure~\ref{#1}}}
\newcommand{\reffigpart}[2]{{Figure~\ref{#1}(#2)}}
\newcommand{\reflem}[1]{{Lemma~\ref{#1}}}
\newcommand{\refthm}[1]{{Theorem~\ref{#1}}}
\newcommand{\refsec}[1]{{Section~\ref{#1}}}
\newcommand{\refcor}[1]{{Corollary~\ref{#1}}}
\newcommand{\algCOMPUTEAC}{
  \begin{algorithm}[bht]
    \DontPrintSemicolon
    \SetKwInOut{Input}{input}
    \KwData{binary cotree $T$ for a cograph $G$}
    \KwResult{values $\card{V_\tau}$ and $\chi_a(G_\tau)$ for every node $\tau \in T$} 
    \BlankLine
    \SetKwFunction{COMPUTEAC}{COMPUTEAC}
    \ProcSty{Procedure} \COMPUTEAC{$t$}\;
    \Input{cotree node $\tau$ from $T$}
    \Begin{
      \eIf{$\tau$ is a leaf}{
        $\card{V_\tau} \longleftarrow 1$\;
        $\chi_a(G_\tau) \longleftarrow 1$\;
      }( \tcc*[h]{$\tau$ has children $\alpha$ and $\beta$}){
        \COMPUTEAC{$\alpha$}\;
        \COMPUTEAC{$\beta$}\;
        $\card{V_\tau} = \card{V_\alpha}+\card{V_\beta}$\;
        \eIf{$t$ is a \ZeroNode-node}{
          $\chi_a(G_\tau) \longleftarrow \max\bigl\{\chi_a(G_\alpha),\chi_a(G_\beta)\bigr\}$\;
        }( \tcc*[h]{$\tau$ is a \OneNode-node}){
          $\chi_a(G_\tau) \longleftarrow
              \min\bigl\{ \card{V_\alpha}+\chi_a(G_\beta), \card{V_\beta}+\chi_a(G_\alpha) \bigr\}
          $\;
        }
      }
    }
    \caption{Calling \protect\COMPUTEAC{$\rho$} where $\rho$ is the root of $T$
             yields $\chi_a(G)$ in $O(n)$ time.}
    \label{alg:phaseI}
  \end{algorithm}
}
\newcommand{\procSATURATE}{
  \SetKwFunction{SATURATE}{SATURATE}
  \ProcSty{Procedure} \SATURATE{$\tau$, $K$}\;
  \Input{node $\tau \in T$, positive integer $K$}
  \Begin{
    \eIf{$\tau$ is a leaf}{
      $\phi(\tau) \longleftarrow K$\;
    }( \tcc*[h]{$\tau$ has children $\alpha$ and $\beta$}){
      \SATURATE{$\alpha$, $K$}\; 
      \SATURATE{$\beta$, $K+\card{V_\alpha}$}\;
    }
  }
}
\newcommand{\procASSIGNCOLORS}{
  \SetKwFunction{ASSIGNCOLORS}{ASSIGNCOLORS}
  \ProcSty{Procedure} \ASSIGNCOLORS{$\tau$, $K$}\;
  \Input{node $\tau$ from $T$, positive integer $K$}
  \Begin{
    \eIf{$\tau$ is a leaf}{
      $\phi(\tau) \longleftarrow K$\;
    }( \tcc*[h]{$\tau$ has children $\alpha$ and $\beta$}){
      \eIf{$\tau$ is a \ZeroNode-node}{
        \ASSIGNCOLORS{$\alpha$, $K$}\;
        \ASSIGNCOLORS{$\beta$, $K$}\;
      }( \tcc*[h]{$\tau$ is a \OneNode-node}){
        \eIf{$\card{V_\alpha}+\chi_a(G_\beta) \leq \card{V_\beta}+\chi_a(G_\alpha)$}{
          \SATURATE{$\alpha$, $K$}\;
          \ASSIGNCOLORS{$\beta$, $K+\card{V_\alpha}$}\;
        }( \tcc*[h]{$\card{V_\beta}+\chi_a(G_\alpha) < \card{V_\alpha}+\chi_a(G_\beta)$}){
          \SATURATE{$\beta$, $K$}\;
          \ASSIGNCOLORS{$\alpha$, $K+\card{V_\beta}$}\;
        }
      }
    }
  }
}
\newcommand{\algASSIGNCOLORS}{
  \begin{algorithm}[bht]
    \DontPrintSemicolon
    \SetKwInOut{Input}{input}
    \KwData{binary cotree $T$ for a cograph $G$;
            values $\card{V_\tau}$ and $\chi_a(G_\tau)$ for every node $\tau$ in $T$}
    \KwResult{acyclic coloring $\phi\colon V \rightarrow \{1,\ldots,\chi_a(G)\}$}
    \BlankLine
    \procSATURATE
    \BlankLine
    \procASSIGNCOLORS
    \caption{calling \protect\ASSIGNCOLORS{$\rho$, $1$} where $\rho$ is the root of $T$
             yields an optimal acyclic coloring $G$ in $O(n)$ time.
             The obtained coloring is also an optimal star coloring of $G$.}
    \label{alg:phaseII}
  \end{algorithm}
}
\title{ \Large \bf \paperTitle\footnote{\paperTitleFootnote}}
\author{\normalsize Andrew Lyons \\[2mm]
        \small Department of Computer Science, Dartmouth College \\
        \small \href{mailto:lyonsam@gmail.com}{\texttt{lyonsam@gmail.com}}
}
\date{}
\begin{document}
\maketitle
\pagestyle{plain}
\thispagestyle{plain}

\begin{abstract}
  An \emph{acyclic coloring} of a graph is a proper vertex coloring such that
the union of any two color classes induces a disjoint collection of trees.
The more restricted notion of \emph{star coloring} requires that
the union of any two color classes induces a disjoint collection of stars.
We prove that every acyclic coloring of a cograph is also a star coloring
and give a linear-time algorithm for finding an optimal acyclic and star coloring of a cograph.
If the graph is given in the form of a cotree, the algorithm runs in $O(n)$ time.
We also show that the acyclic chromatic number, the star chromatic number,
the treewidth plus one, and the pathwidth plus one are all equal for cographs.

\end{abstract}

\section{Introduction}\label{sec:intro}
A {\em proper vertex coloring} (or {\em proper coloring}) of a graph $G$
is a mapping $\phi : V \rightarrow \mathbb{N}^+$ 
such that if $a$ and $b$ are adjacent vertices, then $\phi(a) \neq \phi(b)$.
The chromatic number of a graph $G,$ denoted $\chi(G),$
is the minimum number of colors required in any proper coloring of $G.$
An {\em acyclic coloring} of a graph is a proper coloring such that the subgraph induced by
the union of any two color classes is a disjoint collection of trees.
A {\em star coloring} of a graph is a proper coloring such that the subgraph induced by
the union of any two color classes is a disjoint collection of stars.
The acyclic and star chromatic numbers of $G$ are defined analogously to the chromatic number
and are denoted by $\chi_a(G)$ and $\chi_s(G),$ respectively.
Since a disjoint collection of stars constitutes a forest,
it follows that every star coloring is also an acyclic coloring and $\chi_a(G) \leq \chi_s(G)$ for every graph $G.$
We will often find it useful to work with the alternative definitions
implied by the following (folklore) observation.
\begin{obs}\label{obs:acystarchar}
  The following are true whenever $\phi$ is a proper coloring of a graph $G$.
  \begin{itemize}
    \item[] $\phi$ is an acyclic coloring of $G$ if and only if every cycle in $G$ uses at least three colors.
    \item[] $\phi$ is a star coloring of $G$ if and only if every path on four vertices in $G$ uses at least three colors.
  \end{itemize}
\end{obs}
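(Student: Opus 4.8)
The plan is to reduce everything to the subgraphs induced by pairs of color classes. For distinct colors $i$ and $j$, let $H_{ij}$ denote the subgraph of $G$ induced by the vertices colored $i$ or $j$; since $\phi$ is proper, $H_{ij}$ is properly $2$-colored and hence bipartite. Directly from the definitions, $\phi$ is an acyclic coloring precisely when every $H_{ij}$ is a forest, and $\phi$ is a star coloring precisely when every $H_{ij}$ is a disjoint union of stars. The bridge back to $G$ is the elementary remark that a cycle (respectively, a path on four vertices) of $G$ uses only the colors $i$ and $j$ if and only if it is contained in $H_{ij}$; I would state this at the outset and use it for both bullet points.

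For the first equivalence I would argue as follows. Because $\phi$ is proper, every cycle of $G$ already uses at least two colors, so `every cycle uses at least three colors' fails exactly when some cycle uses exactly two colors. A two-colored cycle lies in the corresponding $H_{ij}$, and conversely every cycle of any $H_{ij}$ is a two-colored cycle of $G$; hence all the $H_{ij}$ are forests if and only if $G$ has no two-colored cycle. This is the claimed characterization, and it needs no more than the correspondence above.

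For the second equivalence the substantive ingredient is a structural fact: a bipartite graph is a disjoint union of stars if and only if it contains no path on four vertices (no $P_4$). I would prove this by noting that a bipartite $P_4$-free graph has no cycle at all --- its shortest cycle would be even, of length at least four, and any such cycle contains a $P_4$ --- so it is a forest, and a forest whose components contain no $P_4$ has all components of diameter at most two, i.e.\ stars; the reverse implication is immediate since a star has no $P_4$. Applying this to each bipartite $H_{ij}$ gives that $\phi$ is a star coloring if and only if no $H_{ij}$ contains a $P_4$, equivalently if and only if no path on four vertices of $G$ uses only two colors, which is the second bullet.

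The only step that is more than routine is this structural lemma, and specifically the implication that a $P_4$-free bipartite graph is acyclic. Without bipartiteness, forbidding $P_4$ would still permit triangles, which are not stars, so the argument genuinely relies on each $H_{ij}$ being properly two-colored. Isolating and using that bipartiteness is where I would focus the care; the remaining verifications are one-line consequences of the color-class correspondence.
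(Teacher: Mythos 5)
Your proof is correct. Note that the paper itself offers no proof of this observation---it is stated as folklore---so there is nothing to compare against; your argument via the induced bipartite subgraphs $H_{ij}$ on pairs of color classes is the standard way to verify it, and it is complete. In particular, you correctly isolate the one nontrivial point: that a $P_4$-free bipartite graph is a disjoint union of stars, where bipartiteness (i.e., properness of $\phi$) is needed to exclude triangles, and you correctly treat the paths and cycles as subgraphs rather than induced subgraphs, which is the reading the paper relies on later (e.g., in the proof of Theorem~\ref{thm:StarAcyclic}).
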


A great deal of graph-theoretical research has been conducted on acyclic and star coloring
since they were introduced in the early seventies by Gr\"unbaum~\cite{Grünbaum1973}.
Our investigation of these problems from an algorithmic point of view is motivated in part by their
applications in combinatorial scientific computing,
where they model the optimal evaluation of sparse Hessian matrices.
In fact, these coloring problems were independently discovered and studied by the scientific computing community.
The survey of Gebremedhin et al.~\cite{Gebremedhin2005WCI} gives a history of the subject
as well as an overview of the use of these coloring variants in computing sparse derivative matrices.

The acyclic and star coloring problems are both $\NP$-hard,
and most results concerning their complexity on special classes of graphs are negative.
In particular, both problems remain $\NP$-hard even when restricted to bipartite graphs~\cite{CC1986CCP,CM1984ESH}.
In addition, Albertson et al.~\cite{AlbertsonCKKR04} showed that the problem of determining
whether the star chromatic number is at most three is $\NP$-complete even for planar bipartite graphs.
The authors also showed that it is $\NP$-complete to decide whether the chromatic number of a graph $G$
is equal to the star chromatic number of $G$, even if $G$ is a planar graph with chromatic number three.
Inapproximability results for both problems are given in~\cite{GTMP2007}.

Researchers have obtained a few positive algorithmic results for these problems on graphs for which
the acyclic or star chromatic number is bounded by a constant.
In particular, Skulrattanakulchai~\cite{Skulrattanakulchai2004161} gives a linear-time algorithm for
finding an acyclic coloring of a graph with maximum degree three that uses four colors or fewer,
and Fertin and Raspaud~\cite{Fertin200865} give a linear-time algorithm for
finding an acyclic coloring of a graph with maximum degree five that uses nine colors or fewer.
To our knowledge, prior to this work no polynomial time algorithm was known for either of these problems
on a nontrivial class of graphs for which the acyclic or star chromatic number is unbounded. 

In this paper, we consider acyclic and star colorings of cographs.
This class of graphs, which we define formally in \refsec{sec:cographs},
was discovered independently by a number of researchers,
and hence has many characterizations.
We refer the interested reader to the book of
Brandst{\"a}dt, Le, and Spinrad~\cite{BLS1999} for a additional background and
details related to cographs.
Many problems that are $\NP$-complete on general graphs have polynomial time algorithms when restricted to cographs,
in part because of the nice decomposition properties that these graphs exhibit.
Nevertheless, problems such as list coloring and achromatic number remain $\NP$-complete on this class~\cite{Bodlaender89,JS1997}.
Our motivation, however, stems also from a new characterization of cographs:
they are exactly the graphs for which every acyclic coloring is also a star coloring.
We begin \refsec{sec:cographs} with a simple proof of this fact.

Bodlaender and M\"{o}hring~\cite{BM93} showed that the pathwidth of a cograph equals its treewidth.
In \refsec{sec:triangulations}, we prove that the acyclic colorings of a cograph $G$ coincide with
the proper colorings of triangulations of $G.$
As a consequence, we find that the acyclic chromatic number, the star chromatic number,
the treewidth plus one, and the pathwidth plus one are all equal for cographs.
Additionally, we discuss some implications of our results for
the triangulating colored graphs problem,
which is related to a problem from evolutionary biology.

In \refsec{sec:algorithm}, we describe an algorithm that, given a cograph $G$,
produces an optimal acyclic and star coloring of $G$.
When $G$ is given as an adjacency list, our algorithm runs in $O(n+m)$ time,
where $n$ and $m$ are the numbers of vertices and edges in $G$, respectively;
only $O(n)$ time is required when $G$ is given in the form of a cotree,
which is a concise tree-based representation of $G$.

\section{Cographs and cotrees}\label{sec:cographs}
Let $G_1 = (V_1,E_1)$ and $G_2 = (V_2,E_2)$ be graphs such that $V_1 \cap V_2 = \emptyset$.
The {\em disjoint union} of $G_1$ and $G_2$ is the graph $G_1 \disj G_2 = (V_1 \cup V_2, E_1 \cup E_2)$.
The {\em join} of $G_1$ and $G_2$, denoted $G_1 \join G_2$,
 is the graph obtained by adding all possible edges between $G_1$ and $G_2$, i.e.,
$G_1 \join G_2 = \left(V_1 \cup V_2, E_1 \cup E_2 \cup \{v_1v_2 \mid v_1 \in V_1, v_2 \in V_2 \}\right)$.
\begin{dfn}[cograph~\cite{CPS1985}]\label{dfn:cograph}
  A graph $G = (V,E)$ is a \emph{cograph} if and only if one of the following conditions holds:
  \begin{enumerate}
    \item $\card{V} = 1;$
    \item there exist cographs $G_1,\ldots,G_k$ such that $G = G_1 \disj G_2 \disj \cdots \disj G_k;$
    \item there exist cographs $G_1,\ldots,G_k$ such that $G = G_1 \join G_2 \join \cdots \join G_k.$
  \end{enumerate}
\end{dfn}
As noted in \refsec{sec:intro}, this class can be characterized in a number of ways.
We will find the following characterization most useful.
\begin{thm}[{\cite[Theorem 11.3.3]{BLS1999}}]\label{thm:cographchar}
  The cographs are exactly those graphs that
  do not contain an induced path on four vertices.
\end{thm}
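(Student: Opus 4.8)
The plan is to prove both directions, dispatching the easy implication by structural induction on the cotree and reserving the real work for its converse. For the forward direction, that every cograph contains no induced path on four vertices ($P_4$), I would induct on the construction in \refdfn{dfn:cograph}. A single vertex contains no $P_4$. If $G = G_1 \disj \cdots \disj G_k$ and each $G_i$ is $P_4$-free, then any induced $P_4$, being connected, would lie inside a single component $G_i$, contradicting the inductive hypothesis. If $G = G_1 \join \cdots \join G_k$, the key observation is that the three non-edges of a $P_4$ on vertices $a,b,c,d$ (namely $ac$, $ad$, and $bd$) tie all four vertices together; since any two vertices in distinct join factors are adjacent, each non-edge has both endpoints in a common factor, and chaining these forces $a,b,c,d$ into a single $G_i$. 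Either way no induced $P_4$ survives.

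For the converse, that every $P_4$-free graph is a cograph, I would induct on $\card{V}$, matching the two recursive cases of \refdfn{dfn:cograph} to the connectivity of $G$ and of its complement $\overline{G}$. If $G$ is disconnected, its components are $P_4$-free induced subgraphs on fewer vertices, hence cographs by induction, so $G$ is their disjoint union. If $G$ is connected, I would invoke the lemma below to conclude that $\overline{G}$ is disconnected; the vertex sets of the components of $\overline{G}$ then partition $V$ into parts that are pairwise completely adjacent in $G$, exhibiting $G$ as the join of the (smaller, $P_4$-free, hence cograph) subgraphs they induce.

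The crux is therefore the lemma: if $G$ is connected with at least two vertices and $G$ is $P_4$-free, then $\overline{G}$ is disconnected. Equivalently, by the self-complementarity $\overline{P_4} \cong P_4$, whenever both $G$ and $\overline{G}$ are connected, $G$ contains an induced $P_4$. I would prove this by induction on $\card{V}$. The base case $\card{V} = 4$ is a finite check: among the connected graphs on four vertices, only $P_4$ has a connected complement. For the inductive step, fix a vertex $v$. If $G - v$ and $\overline{G} - v$ are both connected, induction already yields an induced $P_4$ inside $G - v$. Otherwise one of them is disconnected, and since an induced $P_4$ in $\overline{G}$ gives one in $G$, it suffices to treat $G - v$ disconnected, with components $C_1, \ldots, C_k$, $k \ge 2$.

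Here is the heart of the construction. Connectivity of $G$ forces $v$ to have a neighbor in each $C_i$, while connectivity of $\overline{G}$ prevents $v$ from being adjacent to everything, so $v$ has a non-neighbor $w$, say in $C_1$. Walking within the connected subgraph $C_1$ from a neighbor of $v$ toward $w$, I can extract an edge $pq$ with $vp \in E$ but $vq \notin E$. Taking any neighbor $y$ of $v$ in a different component $C_2$, the four vertices $y, v, p, q$ induce exactly the path $y - v - p - q$: the edges $yv$, $vp$, $pq$ are present, whereas $vq$ is a non-edge by choice and $yp$, $yq$ are non-edges because $y$ and $p,q$ lie in distinct components of $G - v$. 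This is the induced $P_4$ sought, completing the induction and hence the lemma. I expect this lemma — specifically, producing an honest induced $P_4$ rather than merely some forbidden configuration — to be the main obstacle, since the forward direction and the outer induction become routine once it is in hand.
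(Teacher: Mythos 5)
Your proof is correct, but note that the paper never proves this statement at all: \refthm{thm:cographchar} is imported as a known result, cited to Brandst\"adt, Le, and Spinrad \cite{BLS1999}, so there is no in-paper argument to compare against. What you have written is a correct, self-contained proof of that classical theorem (essentially Seinsche's theorem). The forward direction via the non-edge--chaining argument for joins is sound: the non-edges $ac$, $ad$, $bd$ of an induced path $abcd$ do force all four vertices into a single join factor. The converse correctly reduces to the crux lemma that a connected, co-connected graph on at least four vertices contains an induced $P_4$, and your inductive proof of that lemma works: in the case where $G - v$ is disconnected, connectivity of $G$ gives $v$ a neighbor in every component, co-connectivity gives $v$ a non-neighbor $w$, and the edge $pq$ extracted on a path toward $w$ inside $w$'s component, together with a neighbor $y$ of $v$ in another component, does induce exactly the path $y\,v\,p\,q$; the complementation symmetry ($\overline{P_4} \cong P_4$) legitimately handles the case where instead $\overline{G} - v$ is disconnected. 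Two minor bookkeeping points you should make explicit: the lemma as applied also needs the (trivial) cases $\card{V} \in \{2,3\}$, where no graph is both connected and co-connected, and in the outer induction the join decomposition should be stated as $G$ being the join of the subgraphs induced by the components of $\overline{G}$, each of which is $P_4$-free and strictly smaller. Neither is a real gap. What your approach buys is independence from the cited reference: the paper's results (e.g., \refthm{thm:StarAcyclic} and Lemma 3.1, which lean on $P_4$-freeness) would become fully self-contained if your argument were included.
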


\begin{thm} \label{thm:StarAcyclic}
  The cographs are exactly those graphs $G$ for which
  every acyclic coloring of $G$ is also a star coloring of $G$.
\end{thm}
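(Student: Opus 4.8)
The plan is to prove both implications using the forbidden-subgraph characterization of cographs in \refthm{thm:cographchar} (no induced path on four vertices) together with the reformulations of acyclic and star colorings in \refobs{obs:acystarchar}. The whole argument hinges on the single structural difference between the two coloring notions: an acyclic coloring forbids bichromatic cycles, while a star coloring forbids bichromatic paths on four vertices.

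For the forward direction I would assume $G$ is a cograph and argue by contradiction: suppose $\phi$ is an acyclic coloring of $G$ that is \emph{not} a star coloring. By \refobs{obs:acystarchar} there is a path $abcd$ on four vertices using at most two colors, and since $\phi$ is proper these two colors must alternate along the path, so $\phi(a)=\phi(c)$ and $\phi(b)=\phi(d)$. Because $G$ contains no induced path on four vertices, the set $\{a,b,c,d\}$ must carry at least one edge beyond the path edges $ab$, $bc$, $cd$; the only candidate chords are $ac$, $bd$, and $ad$. Properness rules out $ac$ and $bd$ (their endpoints share a color), leaving $ad$. But then $a,b,c,d,a$ is a cycle using only two colors, contradicting acyclicity via \refobs{obs:acystarchar}. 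Hence no such $\phi$ exists, and every acyclic coloring of a cograph is a star coloring.

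For the converse I would prove the contrapositive. If $G$ is not a cograph, then by \refthm{thm:cographchar} it contains an induced path $abcd$ on four vertices, and I would exhibit an explicit acyclic coloring that fails to be a star coloring: assign color $1$ to both $a$ and $c$, color $2$ to both $b$ and $d$, and a distinct fresh color to every remaining vertex. This $\phi$ is proper, since the only monochromatic pairs are $\{a,c\}$ and $\{b,d\}$, which are nonadjacent because the path is induced, while every other vertex carries a unique color; and the path $abcd$, colored $1,2,1,2$, witnesses via \refobs{obs:acystarchar} that $\phi$ is not a star coloring.

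The one step that needs genuine care — and the place where the construction could secretly fail — is verifying that this $\phi$ is nonetheless acyclic. The key observation is that a bichromatic cycle must alternate between two color classes, each of which therefore contains at least two vertices. Since the only color classes with more than one vertex are $\{a,c\}$ and $\{b,d\}$, any bichromatic cycle would be confined to $\{a,b,c,d\}$; but this set induces exactly the path $abcd$, which contains no cycle. Thus $\phi$ has no bichromatic cycle and is acyclic by \refobs{obs:acystarchar}. This completes the contrapositive, and together with the forward direction establishes the claimed equivalence.
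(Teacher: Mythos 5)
Your proposal is correct and takes essentially the same approach as the paper: the forward direction is the same chord analysis of a two-colored path on four vertices in a $P_4$-free graph (properness kills the chords $ac$ and $bd$, so the path closes into a bichromatic cycle), and the converse uses the identical coloring construction on an induced path $abcd$ with $\phi(a)=\phi(c)$, $\phi(b)=\phi(d)$ and fresh colors elsewhere. If anything, you verify the acyclicity of that coloring more explicitly than the paper, which simply asserts it.
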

\begin{proof}
  ($\Rightarrow$):
  Let $\phi$ be an acyclic coloring of a cograph $G$, and
  let $P$ be a path on four vertices in $G$.
  Since, by \refthm{thm:cographchar}, $G$ cannot contain an induced path on four vertices,
  the graph induced by $P$ must either
  be a cycle (in which case it uses at least three colors by \refobs{obs:acystarchar})
  or contain a triangle
  (in which case it uses at least three colors because $\phi$ is a proper coloring).
  Thus, by \refobs{obs:acystarchar}, $\phi$ is a star coloring of $G$.

  ($\Rightarrow$):
  Now suppose $G$ is a graph in which every acyclic coloring is also a star coloring
  and assume for the sake of contradiction that $G$ contains
  an induced path on vertices $abcd$ (in that order).
  Let $\phi$ be a coloring that assigns each vertex in $G$ its own (distinct) color
  with the exceptions $\phi(a)=\phi(c)$ and $\phi(b)=\phi(d)$.
  Because $abcd$ induces a path, we have that $\phi$ is both proper and acyclic,
  yet, by \refobs{obs:acystarchar}, $\phi$ is not a star coloring,
  which is a contradiction.
\end{proof}
\begin{cor} \label{cor:StarAcyclic}
  For every cograph $G$, $\chi_s(G) = \chi_a(G).$
\end{cor}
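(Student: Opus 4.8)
The plan is to derive this corollary directly from \refthm{thm:StarAcyclic}, using the fact—already noted in \refsec{sec:intro}—that $\chi_a(G) \leq \chi_s(G)$ holds for \emph{every} graph $G$, since every star coloring is in particular an acyclic coloring. Thus one inequality is free, and all the work lies in establishing the reverse inequality $\chi_s(G) \leq \chi_a(G)$ for cographs.

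For the reverse direction I would take an optimal acyclic coloring $\phi$ of the cograph $G$, that is, one using exactly $\chi_a(G)$ colors. Because $G$ is a cograph, \refthm{thm:StarAcyclic} tells us that every acyclic coloring of $G$ is automatically a star coloring of $G$; in particular $\phi$ itself is a star coloring. Hence $G$ admits a star coloring with $\chi_a(G)$ colors, which by the definition of the star chromatic number gives $\chi_s(G) \leq \chi_a(G)$.

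Combining the two inequalities $\chi_a(G) \leq \chi_s(G)$ and $\chi_s(G) \leq \chi_a(G)$ yields $\chi_s(G) = \chi_a(G)$, completing the argument. There is essentially no obstacle to overcome here: the substantive content—that acyclic colorings of cographs coincide with star colorings—is exactly what \refthm{thm:StarAcyclic} supplies, and the corollary is simply the numerical consequence of applying that equivalence to an optimal coloring. The only point worth stating explicitly is that an optimal acyclic coloring exists (which it does, as $G$ is finite), so that the bound transfers cleanly to $\chi_s$.
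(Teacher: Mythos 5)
Your proof is correct and is exactly the derivation the paper intends: the paper leaves the corollary unproved as an immediate consequence of \refthm{thm:StarAcyclic}, and your argument---combining the universal inequality $\chi_a(G) \leq \chi_s(G)$ with the fact that an optimal acyclic coloring of a cograph is itself a star coloring---is the standard way to fill it in. No gaps; nothing further is needed.
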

The fact that every acyclic coloring of a cograph is also a star coloring means that,
for the bulk of this paper, we may restrict our attention to acyclic colorings.

\subsection{Cotrees}
Cographs can be recognized in linear time~\cite{CPS1985,HP2005},
and most recognition algorithms also produce a special decomposition structure
in the same time bound when the input graph $G$ is a cograph.
We now introduce this structure, which is often used in algorithms designed to work on cographs.
A \emph{cotree} for a cograph $G$ is a rooted tree $T$
whose leaves correspond to the vertices of $G$ and whose internal nodes
are given labels from $\{\join,\disj\}$ such that
two vertices in $G$ are adjacent if and only if the lowest common ancestor
of the corresponding leaves in $T$ is a \OneNode-node.
(For the sake of clarity, we will use the word ``node'' when referring to cotrees,
whereas the term ``vertex'' will be reserved for the context of the original graph $G$.)
For a node $\tau$ in $T$, $V_\tau$ denotes the set of vertices in $G$ that
correspond to leaves in the subtree of $T$ rooted at $\tau$;
we denote by $G_\tau$ the subgraph of $G$ induced by $V_\tau$.
Cotrees thus describe the recursive construction of cographs
described in \refdfn{dfn:cograph}.

While there may be many cotrees for a given cograph $G$,
the \emph{canonical cotree} of $G$, which is characterized by the property that
any path from a leaf to the root alternates between \ZeroNode-nodes and \OneNode-nodes,
is unique up to isomorphism~\cite{CLS1981}. 
It is often more convenient to work with cotrees whose
internal nodes have exactly two children.
Since the operations $\disj$ and $\join$ are commutative and associative,
one can show that any cotree $T$ can, in linear time,
be converted into a cotree $T'$ such that $T'$ that meets this condition
and has size linear in that of $T$~\cite{BM93}.
An example is shown in \reffig{fig:cographAndCotrees}.
\begin{figure}[bht]
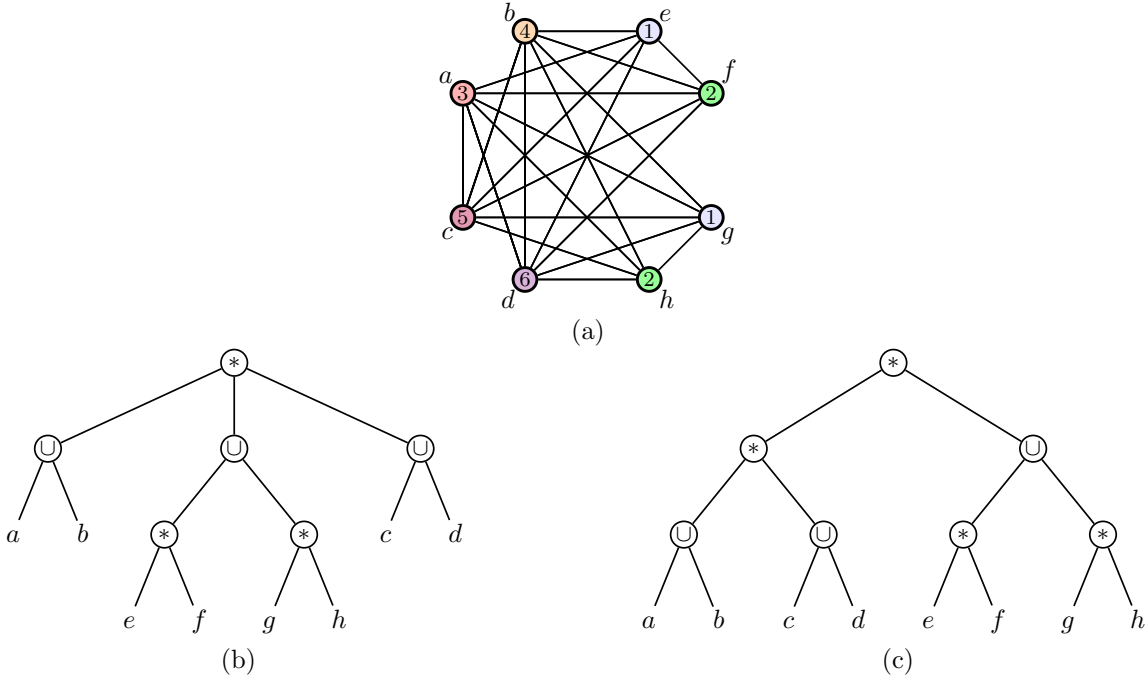
 \centering
  \figCographColored{0.25\textwidth}{\normalsize}{\footnotesize} \\
  (a) \\
  \begin{tabular*}{0.95\textwidth}{@{\extracolsep{\fill}}cc}
    \figCotreeCanon{0.45\textwidth}{0.15\textheight}{\normalsize} &
    \figCotreeBinary{0.45\textwidth}{0.15\textheight}{\normalsize} \\
    (b) & (c)
  \end{tabular*}
  \caption{(a) A cograph $G$; (b) its canonical cotree; (c) a binary cotree.
           The graph $G$ is shown along with an optimal acyclic coloring, which is
           (necessarily) also an optimal star coloring by \refthm{thm:StarAcyclic}.}
  \label{fig:cographAndCotrees}
\end{figure}

We conclude this section with a lemma that suggests a natural way that
a binary cotree can be used to compute
the acyclic and star chromatic numbers of a cograph;
a formal description and analysis of our algorithm is given in \refsec{sec:algorithm}.
Note that this result applies to disjoint unions and joins of \emph{general} graphs,
and is analogous to a result of Bodlaender and M{\"o}hring~\cite[Lemma 3.4]{BM93}
concerning treewidth and pathwidth (we define these notions in the next section).
\begin{lem}\label{lem:acyStarJoinDisj}
  The following hold for any graphs $G_1 = (V_1,E_1)$ and $G_2 = (V_2,E_2).$
  \begin{enumerate}
    \item $\chi_a(G_1 \disj G_2) = \max\bigl\{\chi_a(G_1),\chi_a(G_2)\bigr\};$
    \item $\chi_s(G_1 \disj G_2) = \max\bigl\{\chi_s(G_1),\chi_s(G_2)\bigr\};$
    \item $\chi_a(G_1 \join G_2) = \min\bigl\{\chi_a(G_1)+\card{V_2},\chi_a(G_2)+\card{V_1}\bigr\};$
    \item $\chi_s(G_1 \join G_2) = \min\bigl\{\chi_s(G_1)+\card{V_2},\chi_s(G_2)+\card{V_1}\bigr\}.$
  \end{enumerate}
\end{lem}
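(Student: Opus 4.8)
The plan is to prove the two disjoint-union identities \emph{(i)} and \emph{(ii)} together, and likewise the two join identities \emph{(iii)} and \emph{(iv)} together, since within each pair the acyclic and star arguments are nearly identical once they are phrased through the two characterizations in \refobs{obs:acystarchar}.

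For the disjoint union, the essential observation is that every cycle (resp.\ every path on four vertices) of $G_1 \disj G_2$ lies entirely within $G_1$ or entirely within $G_2$, since there are no edges joining the two parts. Hence the restriction of any acyclic (resp.\ star) coloring of $G_1 \disj G_2$ to $V_i$ is an acyclic (resp.\ star) coloring of $G_i$, which gives $\chi_a(G_1 \disj G_2) \ge \max\{\chi_a(G_1),\chi_a(G_2)\}$. Conversely, given optimal colorings of $G_1$ and $G_2$ drawn from a common palette of $\max\{\chi_a(G_1),\chi_a(G_2)\}$ colors, the same localization shows that no cycle (resp.\ $P_4$) can be bichromatic, so the combined coloring is acyclic (resp.\ a star coloring). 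This settles \emph{(i)} and \emph{(ii)}.

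For the join I would first record that, because every vertex of $V_1$ is adjacent to every vertex of $V_2$, the color sets $\phi(V_1)$ and $\phi(V_2)$ are disjoint in any proper coloring $\phi$ of $G_1 \join G_2$. For the upper bound, fix an optimal acyclic (resp.\ star) coloring of $G_1$ and assign each vertex of $V_2$ its own fresh color, using $\chi_a(G_1)+\card{V_2}$ (resp.\ $\chi_s(G_1)+\card{V_2}$) colors in total. To see this is acyclic (resp.\ a star coloring), suppose some cycle (resp.\ $P_4$) were bichromatic; a proper $2$-coloring of a cycle or of a $P_4$ uses each of its two colors on at least two vertices, but every fresh color occurs on a single vertex of $V_2$, so both colors must come from $G_1$, forcing the whole configuration into $V_1$ and contradicting the choice of coloring on $G_1$. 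Taking the better of this construction and its symmetric counterpart yields the ``$\le$'' direction of \emph{(iii)} and \emph{(iv)}.

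The hard part will be the matching lower bound for the join. Let $\phi$ be any acyclic (resp.\ star) coloring of $G_1 \join G_2$. The key claim is that at least one of $V_1,V_2$ is rainbow-colored, i.e.\ receives $\card{V_i}$ distinct colors. Indeed, if two vertices $a,a' \in V_1$ shared a color and two vertices $b,b' \in V_2$ shared a color, then $a,a'$ would be nonadjacent and $b,b'$ nonadjacent by properness, while all four cross edges are present in the join; thus $a\,b\,a'\,b'$ would close up into a bichromatic $4$-cycle, which is forbidden for an acyclic coloring, and whose sub-path $a\,b\,a'\,b'$ is a bichromatic $P_4$ forbidden for a star coloring. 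This single configuration therefore disposes of both cases at once. Assuming without loss of generality that $V_2$ is rainbow, we have $\card{\phi(V_2)} = \card{V_2}$, while $\phi$ restricted to $G_1$ is acyclic (resp.\ a star coloring), so $\card{\phi(V_1)} \ge \chi_a(G_1)$ (resp.\ $\ge \chi_s(G_1)$); since the two color sets are disjoint, $\phi$ uses at least $\chi_a(G_1)+\card{V_2}$ (resp.\ $\chi_s(G_1)+\card{V_2}$) colors, and the symmetric possibility supplies the other term of the minimum. I expect the only real subtlety to be isolating this ``rainbow on at least one side'' dichotomy and noticing that the $4$-cycle $a\,b\,a'\,b'$ serves simultaneously as the forbidden bichromatic cycle and the forbidden bichromatic $P_4$.
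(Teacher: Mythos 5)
Your proposal is correct and follows essentially the same route as the paper's proof: the same construction for the upper bound on the join (keep an optimal coloring on one side, assign fresh distinct colors to the other), and the same lower-bound dichotomy via the bichromatic $4$-cycle $a\,b\,a'\,b'$ forcing one side to be rainbow. The only difference is presentational---you carry the star case (iv) along in parallel through the $P_4$ characterization and spell out (i), (ii), where the paper declares these ``obvious'' and (iv) ``similar''---which fills in detail rather than changing the argument.
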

\begin{proof}
  The proofs of (i) and (ii) are obvious.

  Our proof of (iii) begins by showing that $\chi_a(G_1 \join G_2) \leq \min\bigl\{\chi_a(G_1)+\card{V_2},\chi_a(G_2)+\card{V_1}\bigr\}$.
  We describe an algorithm that, given optimal acyclic colorings of $G_1$ and $G_2$,
  produces an acyclic coloring $\phi$ of $G_1 \join G_2$ that uses the desired number of colors.
  Let $\phi_1$ and $\phi_2$ be arbitrary optimal acyclic colorings of $G_1$ and $G_2,$ respectively.
  Assume without loss of generality that $\chi_a(G_2)+\card{V_1} \leq \chi_a(G_1)+\card{V_2}.$
  We construct $\phi$ as follows.
  Color those vertices in $V_2$ the same as they are colored by $\phi_2$, using the colors in $\{1,\ldots,\chi_a(G_2)\}$.
  Color those vertices in $V_1$ such that each $v \in V_1$ receives a distinct color in $\{\chi_a(G_2)+1,\ldots,\chi_a(G_2)+\card{V_1}\}$.
  Suppose that $\phi$ causes a bichromatic cycle $C \subseteq V$ in $G_1 \join G_2$.
  Since each vertex in $V_1$ gets a distinct color and no vertex in $V_1$ shares a color with a vertex in $V_2$,
  it follows that any bicromatic cycle in $G_1 \join G_2$ must be contained entirely in $V_2$,
  which contradicts the fact that $\phi_2$ is an acyclic coloring of
  the subgraph of $G_1 \join G_2$ induced by $V_2.$
  Thus $\phi$ is an acyclic coloring of $G_1 \join G_2$, which completes this direction of the proof.

  Now let $\phi$ be an optimal acyclic coloring of $G_1 \join G_2$.
  Observe that, since every vertex from $V_1$ is adjacent to every vertex from $V_2$,
  $V_1$ and $V_2$ must receive disjoint sets of colors from $\phi$.
  Moreover, $\phi$ must assign $\card{V_i}$ distinct colors to the vertices in $V_i$
  for some $i \in \{1,2\}$, as otherwise there would be
  vertices $a_1,b_1 \in V_1$ such that $\phi(a_1) = \phi(b_1)$ and
  vertices $a_2,b_2 \in V_2$ such that $\phi(a_2) = \phi(b_2)$,
  and thus $\phi$ would color the cycle $a_1a_2b_1b_2$ with only two colors
  (contradicting the fact that $\phi$ is an acyclic coloring).
  Finally, since
  $\phi$ must be an acyclic coloring of every induced subgraph of $G_1 \join G_2$,
  we have that $\phi$ must assign at least $\chi_a(G_i)$ colors to
  the subgraph induced by $V_i$ for $i \in \{1,2\}$,
  which implies that $\phi$ must use at least
  $\min\bigl\{\chi_a(G_1)+\card{V_2},\chi_a(G_2)+\card{V_1}\bigr\}$ colors,
  which completes the proof of (iii).

  The proof of (iv) is similar to that of (iii).
\end{proof}

\section{Acyclic colorings and triangulations of cographs}\label{sec:triangulations}

A graph is \emph{chordal} if it has no induced cycle on four or more vertices, i.e.,
every cycle of length greater than three has a \emph{chord}.
A \emph{triangulation} of a graph $G = (V,E)$ is
a chordal graph $\Gt = (V,\Et)$ such that $E \subseteq \Et$.
The following notion allows us to characterize those cographs which are also chordal
in terms of the structure of their cotrees.
\begin{dfn}[skew cotree]
  A cotree $T$ is said to be \emph{skew} if
  at most one child of every \OneNode-node in $T$ has some \ZeroNode-node as a descendant.
\end{dfn}
\begin{lem}\label{lem:cyclesincotrees}
  A cograph $G$ contains an induced cycle on four or more vertices if and only if
  $T$ is not skew for every cotree $T$ of $G$.
\end{lem}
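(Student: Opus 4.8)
The plan is to prove the stronger and cleaner statement that \emph{for every} cotree $T$ of $G$, the tree $T$ is skew if and only if $G$ is chordal (i.e.\ contains no induced cycle on four or more vertices); the lemma is then immediate. Before inducting, I would reduce the ``four or more vertices'' condition to a single forbidden subgraph: since $G$ is a cograph it contains no induced $P_4$ by \refthm{thm:cographchar}, and every cycle on five or more vertices contains four consecutive vertices inducing a $P_4$. Hence a cograph contains an induced cycle on four or more vertices only if it contains an induced $C_4$, so ``$G$ is not chordal'' is equivalent to ``$G$ contains an induced $C_4$.''

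The equivalence ``$T$ skew $\iff$ $G$ chordal'' I would prove by structural induction on $T$, following the recursive construction in \refdfn{dfn:cograph}. The base case is a single leaf: $K_1$ is chordal and its cotree has no \OneNode-node, hence is vacuously skew. For the inductive step I split on the label of the root. If the root is a \ZeroNode-node, then $G = \bigdisj_i G_i$, where $G_i$ is the cograph of the child subtree $T_i$; any induced cycle lies in a single component, so $G$ is chordal iff every $G_i$ is chordal, while $T$ is skew iff every $T_i$ is skew (the root itself imposes no constraint). The inductive hypothesis then closes this case.

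The heart of the argument is the \OneNode-node case, $G = \bigjoin_i G_i$, where I would use two observations. First, a factor $G_i$ is a clique if and only if its subtree $T_i$ contains no \ZeroNode-node: a \ZeroNode-node in $T_i$ would be the lowest common ancestor of two leaves in different children, yielding a non-adjacent pair, and conversely a \ZeroNode-free subtree makes every pair of its leaves adjacent. Second, $\bigjoin_i G_i$ contains an induced $C_4$ exactly when two distinct factors each contain a non-edge: a non-adjacent pair in $G_i$ together with a non-adjacent pair in $G_j$ induces a $K_{2,2}=C_4$ via the join edges, and conversely if all but one factor is a clique then $G$ arises from a single (chordal) factor by adding universal vertices. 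Combining these, $G$ is chordal iff every $G_i$ is chordal and at most one $G_i$ is a non-clique; by the first observation the latter says that at most one child of the root has a \ZeroNode-node descendant, which together with the inductive hypothesis applied to each $T_i$ is precisely the statement that $T$ is skew.

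I expect the \OneNode-node case to be the main obstacle, and within it the claim that making all but one factor complete preserves chordality; the clean way to see this is that an induced cycle through a universal vertex would force that vertex to have only two neighbours among the other cycle vertices, so such a cycle has at most three vertices, confining every long induced cycle to the single non-complete factor. A secondary point requiring care is that the equivalence is proved for an \emph{arbitrary} cotree $T$, so it also shows that skewness is independent of the chosen cotree; the ``for every cotree'' quantifier in the lemma is then automatic, since skewness coincides with the cotree-independent property of chordality of $G$.
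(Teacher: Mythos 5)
Your proposal is correct, but it takes a genuinely different route from the paper's. The paper gives a direct, non-inductive argument on a fixed but arbitrary cotree $T$: for the forward direction it first reduces (as you do) any induced cycle in a cograph to a $C_4$, say $a_1b_1a_2b_2$, takes the lowest common ancestor $\tau$ of the adjacent pair $a_1,b_1$ (necessarily a \OneNode-node with distinct children $\alpha$, $\beta$ above $a_1$ and $b_1$), and then argues that the \ZeroNode-nodes $\alpha'=\mathrm{lca}(a_1,a_2)$ and $\beta'=\mathrm{lca}(b_1,b_2)$ must lie below $\alpha$ and $\beta$ respectively (else the adjacency of $a_2$ and $b_1$, say, would be contradicted), so $\tau$ witnesses non-skewness; for the converse it picks leaves under the two \ZeroNode-nodes witnessing non-skewness and checks that they induce a $C_4$. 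You share that second kernel --- two non-edges across a join yield an induced $K_{2,2}$ --- but you replace the lowest-common-ancestor analysis by structural induction on $T$ following \refdfn{dfn:cograph}, with the new ingredient that joining a chordal graph with cliques (adding universal vertices) preserves chordality. What your route buys: the intermediate characterization, that a join is chordal iff all factors are chordal and at most one is a non-clique, holds for joins of \emph{arbitrary} graphs, in the spirit of \reflem{lem:acyStarJoinDisj}, and your formulation makes explicit that skewness is a cotree-independent property (the paper also gets this, but only implicitly from quantifying over arbitrary $T$ in both directions). What the paper's route buys: brevity, and no need to verify the chordality-preservation step. One caveat: your second observation is false as literally stated --- $C_4 \join K_1$ contains an induced $C_4$ although only one factor has a non-edge --- but you never use that direction; the direction you do use (two factors with non-edges force an induced $C_4$), combined with your universal-vertex argument under the inductive hypothesis that every factor is chordal, is exactly what your final biconditional requires, so this is a slip of phrasing rather than a gap in the proof.
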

\begin{proof}
  ($\Rightarrow$):
  Suppose $G$ contains an induced cycle on four or more vertices.
  Since $G$ is a cograph, and thus cannot contain an induced path on four vertices,
  such a cycle must consist of \emph{exactly} four vertices,
  say $a_1b_1a_2b_2$ (in that order).
  Let $T$ be an arbitrary cotree of $G$ and denote by
  $\tau$ the lowest common ancestor in $T$ of
  the leaves corresponding to $a_1$ and $b_1$.
  Since $a_1$ and $b_1$ are adjacent, we have that $\tau$ is a \OneNode-node and
  has distinct children $\alpha$ and $\beta$ such that
  $\alpha$ is an ancestor of $a_1$ and
  $\beta$ is an ancestor of $b_1$.
  We will demonstrate that $T$ is not skew by showing that 
  both $\alpha$ and $\beta$ are ancestors of \ZeroNode-nodes.
  Let $\alpha'$ be the lowest common ancestor in $T$ of $a_1$ and $a_2$.
  Since $\alpha'$ is a \ZeroNode-node (because $a_1$ and $a_2$ are non-adjacent),
  $\alpha'$ must be distinct from $\tau$.
  At the same time, $\alpha'$, $\alpha$, and $\tau$ are all ancestors of $a_1$,
  implying that all of these nodes lie on the unique path in $T$ from $a_1$ to the root.
  It follows that $\alpha'$ must be a descendant of $\tau$, as otherwise
  the lowest common ancestor of $a_2$ and $b_1$ would be $\alpha'$ (a \ZeroNode-node).
  In particular, $\alpha'$ must also be a descendant of $\alpha$ because
  $\alpha$ is a child of $\tau$ (note that we may have $\alpha' = \alpha$).
  Similar reasoning with respect to the lowest common ancestor $\beta'$ in $T$ of $b_1$ and $b_2$
  implies that $\beta'$ is a \ZeroNode-node and a descendant of $\beta$.
  It follows that $T$ is not skew, which completes this direction of the proof.

  \begin{figure}[bht]
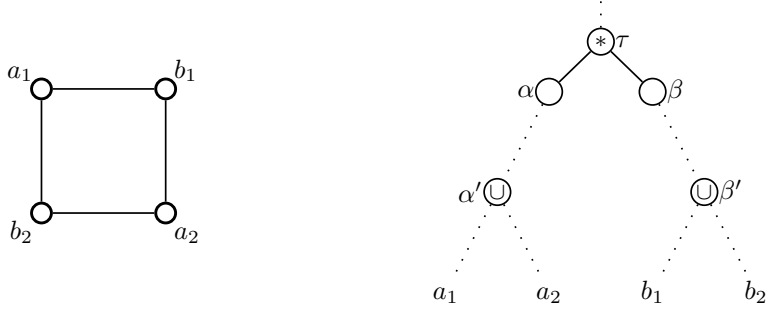
 \centering
    \figLemmaCfour{\textwidth}{\textheight}{\normalsize}
    \caption{The cotree structure of an induced cycle on four vertices in a cograph,
             as in the proof of \reflem{lem:cyclesincotrees}.}
    \label{fig:lemcyclesincotrees}
  \end{figure}

  ($\Leftarrow$):
  Now suppose there exists some cotree $T$ of $G$ that is not skew.
  It follows that $T$ contains a \OneNode-node $\tau$ such that
  $\tau$ is the lowest common ancestor in $T$ of two distinct \ZeroNode-nodes
  which we will call $\alpha'$ and $\beta'$.
  Now let $a_1$ and $a_2$ (resp. $b_1$ and $b_2$) be any pair of leaves whose
  lowest common ancestor in $T$ is $\alpha'$ (resp. $\beta'$).
  Since the lowest common ancestor in $T$ of $a_i$ and $b_j$ is
  $\tau$ for $i,j \in \{1,2\}$, and $\tau$ is a \OneNode-node,
  we have that $a_1b_1a_2b_2$ induces a cycle in $G$,
  which completes the proof.
\end{proof}

\begin{dfn}
  Let $\phi$ be a proper coloring of a cograph $G$ and let $T$ be a cotree of $G$.
  A node $\tau$ in $T$ is said to be \emph{saturated by $\phi$} if
  $\phi(x) \neq \phi(y)$ for all distinct $x,y \in V_\tau$.
\end{dfn}
\begin{lem}\label{lem:acychar}
  If $\phi$ is a proper coloring of a cograph $G$, then the following are equivalent.
  \begin{enumerate}
    \item $\phi$ is an acyclic coloring of $G$;
    \item if $T$ is a cotree of $G$, then
          for every \OneNode-node $\tau$ in $T$,
          at most one child of $\tau$ is not saturated by $\phi$;
    \item $\phi$ is a proper coloring of some triangulation of $G$.
  \end{enumerate}
\end{lem}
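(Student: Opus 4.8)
The plan is to establish the cyclic chain of implications (i) $\Rightarrow$ (ii) $\Rightarrow$ (iii) $\Rightarrow$ (i). Throughout I would fix an arbitrary cotree $T$ of $G$ and use the characterization of acyclic colorings via bichromatic cycles in \refobs{obs:acystarchar}, together with the four-cycle structure exhibited in \reflem{lem:cyclesincotrees}.

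For (i) $\Rightarrow$ (ii) I would argue the contrapositive. If some \OneNode-node $\tau$ has two children $\alpha$ and $\beta$ that are both unsaturated, I choose $a_1,a_2 \in V_\alpha$ with $\phi(a_1)=\phi(a_2)$ and $b_1,b_2 \in V_\beta$ with $\phi(b_1)=\phi(b_2)$. Since $\tau$ is a \OneNode-node, every vertex of $V_\alpha$ is adjacent to every vertex of $V_\beta$, whereas $a_1a_2$ and $b_1b_2$ are non-edges because $\phi$ is proper. Thus $a_1b_1a_2b_2$ is a bichromatic four-cycle, so $\phi$ fails to be acyclic; this is essentially the construction appearing in the ($\Leftarrow$) direction of \reflem{lem:cyclesincotrees}.

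For (ii) $\Rightarrow$ (iii) I would build a triangulation $\Gt$ by recursing on $T$. At a \ZeroNode-node I take the disjoint union of the triangulations produced for its children; this adds no edges and preserves chordality, since a disjoint union of chordal graphs is chordal. At a \OneNode-node, (ii) guarantees that all but (at most) one child, say $\gamma$, are saturated, and I would replace each saturated child's subgraph by a clique and recurse on $\gamma$. A saturated child is rainbow-colored, so completing it keeps $\phi$ proper; moreover distinct saturated children are joined and hence receive disjoint color sets, so their union forms one clique whose vertices are universal in the result. The points to verify are that adding universal vertices to the (inductively chordal) triangulation of $G_\gamma$ preserves chordality and that every new edge joins vertices of different colors, so that $\phi$ stays proper on $\Gt$. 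This construction, together with its chordality check, is the step I expect to be the main obstacle.

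Finally, for (iii) $\Rightarrow$ (i) I would first prove the general fact that every proper coloring of a chordal graph is acyclic: a shortest bichromatic cycle has even length at least four and therefore a chord, whose endpoints are properly colored and so lie at odd distance along the cycle, splitting it into two strictly shorter bichromatic cycles and contradicting minimality. Applying this to the chordal graph $\Gt$ and noting that every edge of $G$ lies in $\Et$, any bichromatic cycle of $G$ would also be one of $\Gt$; hence $\phi$ is acyclic on $G$, which closes the cycle.
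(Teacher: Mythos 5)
Your proposal is correct, and its skeleton---the cyclic chain (i)~$\Rightarrow$~(ii)~$\Rightarrow$~(iii)~$\Rightarrow$~(i), the bichromatic-$C_4$ argument for the first implication, and the ``complete the saturated subtrees'' triangulation for the second---is the same as the paper's; the differences lie in how the two nontrivial steps are certified. For (ii)~$\Rightarrow$~(iii), the paper never checks chordality by hand: it performs the completion as cotree surgery (every \ZeroNode-node below a saturated node becomes a \OneNode-node), observes that condition (ii) makes the resulting cotree $\Tt$ skew, and invokes \reflem{lem:cyclesincotrees} to conclude that the resulting supergraph $\Gt$ is a triangulation of $G$. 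Your version instead proves chordality by induction over the cotree using two general closure facts (a disjoint union of chordal graphs is chordal; joining a clique of universal vertices to a chordal graph preserves chordality), which is self-contained and elementary but re-derives what \reflem{lem:cyclesincotrees} already packages; the paper's route is shorter given that lemma and has the side benefit of delivering the triangulation together with a cotree representing it, which the paper exhibits in \reffig{fig:cographAndCotreesTriangulated}. For (iii)~$\Rightarrow$~(i), the paper's argument is terse---it simply asserts that since every cycle of $\Gt$ has a chord, every cycle must receive at least three colors---and your minimal-bichromatic-cycle argument (such a cycle has even length at least four, its chord joins vertices of different colors and hence at odd distance apart, splitting it into two strictly shorter bichromatic cycles) is precisely the extremal argument needed to make that assertion rigorous, after which \refobs{obs:acystarchar} closes the loop exactly as in the paper. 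In short: same construction, with your write-up trading the skew-cotree machinery for direct, slightly longer, general-purpose chordality arguments.
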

\begin{proof}
  ((i)$\Rightarrow$(ii)):
  Suppose $\phi$ is an acyclic coloring of $G$ and
  assume for the sake of contradiction that there exists some \OneNode-node $\tau$ in $T$
  with distinct children $\alpha$ and $\beta$ such that
  neither $\alpha$ nor $\beta$ is $\phi$-saturated.
  It follows that $\alpha$ is an ancestor of leaves $a_1$ and $a_2$ such that
  $\phi(a_1) = \phi(a_2)$ and,
  likewise, $\beta$ is an ancestor of distinct leaves $b_1$ and $b_2$
  such that $\phi(b_1) = \phi(b_2)$.
  It follows that, with respect to $\phi$, $a_1b_1a_2b_2$ induces a bichromatic cycle in $G$,
  which contradicts the fact that $\phi$ is an acyclic coloring of $G$.

  ((ii)$\Rightarrow$(iii)):
  Now suppose $\phi$ satisfies (ii) and let $T$ be an arbitrary cotree of $G$.
  The desired triangulation of $G$ is constructed as follows.
  For every node $\tau$ in $T$ that is saturated by $\phi$,
  turn the subgraph $G_\tau$ into a clique by changing
  every \ZeroNode-node in the subtree of $T$ rooted at $\tau$ into a \OneNode-node.
  By (ii), the resulting cotree $\Tt$ is skew, and
  \reflem{lem:cyclesincotrees} implies that $\Tt$ is
  a cotree of a triangulation $\Gt$ of $G$.
  Since $\phi$ is a proper coloring of $G$, and
  none of the new edges added to form $\Gt$ connect vertices of the same color,
  it follows that $\phi$ is a proper coloring of $\Gt$ as desired.
  
  ((iii)$\Rightarrow$(i)):
  Suppose $\phi$ is a proper coloring of some triangulation $\Gt$ of $G$.
  Since every cycle in $G$ is also a cycle in $\Gt$ (which is a supergraph of $G$), and
  every cycle in $\Gt$ has a chord, it follows that
  $\phi$ must use at least three colors for every cycle in $G$,
  thus $\phi$ is an acyclic coloring of $G$.
\end{proof}

\subsection{Treewidth and pathwidth of cographs}
The \emph{clique number} of a graph $G$, denoted $\omega(G)$, is
the largest number of pairwise adjacent vertices in $G.$
The \emph{treewidth} of a graph $G,$ denoted $\tw(G)$, is
the minimum value of $\omega(\Gt)-1$ over all triangulations $\Gt$ of $G$.
\begin{thm}[folklore]\label{thm:acyTW}
  For every graph $G$, $\chi_a(G) \leq \tw(G)+1$.
\end{thm}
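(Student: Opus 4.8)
The goal is to show $\chi_a(G) \leq \tw(G)+1$ for every graph $G$.

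The plan is to use the characterization of acyclic colorings established in \reflem{lem:acychar} — specifically, the equivalence between acyclic colorings of $G$ and proper colorings of triangulations of $G$ — together with the definition of treewidth in terms of triangulations. Although \reflem{lem:acychar} is stated for cographs, the implication ((iii)$\Rightarrow$(i)) used here holds for arbitrary graphs: its proof relies only on the facts that a cycle in $G$ is a cycle in any supergraph and that every cycle in a chordal graph has a chord, neither of which uses the cograph hypothesis. So I would invoke (or re-derive) just that direction.

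First I would let $\Gt = (V,\Et)$ be a triangulation of $G$ that achieves the treewidth, meaning $\omega(\Gt) - 1 = \tw(G)$. Since $\Gt$ is chordal, it is perfect, so it admits a proper coloring $\phi$ using exactly $\omega(\Gt) = \tw(G)+1$ colors. (Alternatively, one can color a chordal graph greedily along a perfect elimination ordering to use at most $\omega(\Gt)$ colors, avoiding any appeal to perfection.) Next I would observe that $\phi$ is in particular a proper coloring of the subgraph $G$, since $E \subseteq \Et$. Finally, by the ((iii)$\Rightarrow$(i)) direction of \reflem{lem:acychar}, because $\phi$ is a proper coloring of the triangulation $\Gt$ of $G$, it follows that $\phi$ is an acyclic coloring of $G$. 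This exhibits an acyclic coloring using $\tw(G)+1$ colors, so $\chi_a(G) \leq \tw(G)+1$.

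The main obstacle — really the only subtlety — is the scope of \reflem{lem:acychar}: it is phrased for cographs, so I must either explicitly note that the ((iii)$\Rightarrow$(i)) implication is graph-independent, or simply reproduce its two-line argument inline (every cycle of $G$ persists in the chordal $\Gt$, has a chord there, hence cannot be bichromatic under the proper coloring $\phi$, so every cycle of $G$ uses at least three colors, which by \refobs{obs:acystarchar} makes $\phi$ acyclic). Beyond that, the proof is a direct chain of definitions, so no genuinely hard step is expected.
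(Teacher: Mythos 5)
Your proof is correct and follows essentially the same route as the paper: take a treewidth-optimal triangulation $\Gt$, use perfection of chordal graphs to properly color it with $\omega(\Gt)=\tw(G)+1$ colors, and conclude via the ((iii)$\Rightarrow$(i)) argument of \reflem{lem:acychar} that this coloring is acyclic on $G$. The paper likewise cites ``the proof of'' \reflem{lem:acychar} (rather than the lemma itself) for exactly the reason you flag --- that direction's argument never uses the cograph hypothesis --- so your explicit remark on scope is a careful touch, not a deviation.
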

\begin{proof}
  By the definition of treewidth, there exists a triangulation $\Gt$ of $G$ such that
  $\omega(\Gt) = \tw(G)+1$.
  Since $\Gt$ is chordal and since chordal graphs are perfect~\cite{Golumbic2004},
  $\Gt$ further satisfies $\omega(\Gt) = \chi(\Gt)$.
  The desired inequality then follows from the observation that
  every proper coloring of $\Gt$ is an acyclic coloring of $G$
  (as shown in the proof of \reflem{lem:acychar}).
\end{proof}
\begin{thm}\label{thm:AcyclicTree}
  For every cograph $G$, $\chi_a(G) = \tw(G)+1.$
\end{thm}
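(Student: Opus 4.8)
The plan is to prove the two inequalities $\chi_a(G) \le \tw(G)+1$ and $\tw(G)+1 \le \chi_a(G)$ separately. The first is exactly \refthm{thm:acyTW}, which holds for every graph, so nothing cograph-specific is needed in that direction. The real content is the reverse inequality, for which I would exploit the equivalence between acyclic colorings and proper colorings of triangulations established in \reflem{lem:acychar}.

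First I would fix an optimal acyclic coloring $\phi$ of $G$, so that $\phi$ uses exactly $\chi_a(G)$ colors. By the implication (i)$\Rightarrow$(iii) of \reflem{lem:acychar}, $\phi$ is then a proper coloring of some triangulation $\Gt$ of $G$. In particular $\chi(\Gt) \le \chi_a(G)$, since $\phi$ itself witnesses a proper coloring of $\Gt$ using $\chi_a(G)$ colors.

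Next I would use the fact that $\Gt$, being chordal, is perfect, so that $\omega(\Gt) = \chi(\Gt) \le \chi_a(G)$. Because $\tw(G)$ is defined as the minimum of $\omega(\cdot)-1$ taken over \emph{all} triangulations of $G$, the single triangulation $\Gt$ already yields $\tw(G) \le \omega(\Gt)-1 \le \chi_a(G)-1$, that is, $\tw(G)+1 \le \chi_a(G)$. Combining this with the folklore bound of \refthm{thm:acyTW} gives the claimed equality.

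The individual steps here are routine; the crux of the argument is really \reflem{lem:acychar}, which does the heavy lifting by turning an arbitrary acyclic coloring into a proper coloring of a concrete chordal supergraph. The only point to keep an eye on is the direction of the treewidth inequality: since $\tw(G)$ is obtained by minimizing over triangulations, the $\phi$-dependent triangulation $\Gt$ supplies merely an \emph{upper} bound on $\tw(G)$ — which is precisely the inequality we want — so no further optimization over the space of triangulations is required.
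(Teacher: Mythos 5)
Your proof is correct and follows essentially the same route as the paper: take an optimal acyclic coloring, convert it via \reflem{lem:acychar} into a proper coloring of a triangulation $\Gt$, and conclude $\tw(G)+1 \leq \omega(\Gt) \leq \chi_a(G)$ from the definition of treewidth. The only cosmetic difference is that you invoke perfection of chordal graphs to get $\omega(\Gt) = \chi(\Gt)$, where the trivial inequality $\omega(\Gt) \leq \chi(\Gt)$ (valid for every graph) already suffices.
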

\begin{proof}
  By \refthm{thm:acyTW}, it suffices to show that $\chi_a(G) \geq \tw(G)+1$.
  To see that this is so, consider an arbitrary optimal acyclic coloring of $G$ which,
  by \reflem{lem:acychar}, is a proper coloring of some triangulation $\Gt$ of $G$.
  The desired inequality then follows from the fact that
  $\chi_a(G) \geq \omega(\Gt) \geq \tw(G)+1$.
\end{proof}

A graph is an \emph{interval graph} if its vertices can be put in correspondence with intervals on the real line
such that two vertices are adjacent if and only if the corresponding intervals have a nonempty intersection.
An \emph{intervalization} of a graph $G = (V,E)$ is an interval graph $\Gt = (V,\Et)$ such that $E \subseteq \Et$.
The \emph{pathwidth} of a graph $G,$ denoted $\pw(G)$, is the minimum value of $\omega(\Gt)-1$ over all intervalizations $\Gt$ of $G$.
Note that since the interval graphs form a proper subclass of the chordal graphs,
we have that $\tw(G) \leq \pw(G)$ for all graphs $G$.
Bodlaender and M\"{o}hring obtained the following result by showing that
every triangulation of a cograph $G$ is also an intervalization of $G.$
\begin{thm}[\cite{BM93}]\label{thm:TreePath}
  For every cograph $G$, $\tw(G) = \pw(G)$.
\end{thm}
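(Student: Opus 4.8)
The inequality $\tw(G) \le \pw(G)$ holds for all graphs, as already observed (every interval graph is chordal), so the plan is to establish the reverse inequality $\pw(G) \le \tw(G)$ for a cograph $G$. To do this I would exhibit a single intervalization $\Gt$ of $G$ whose clique number is at most $\tw(G)+1$; then $\pw(G) \le \omega(\Gt)-1 \le \tw(G)$ and we are done.

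The intervalization will be produced from an optimal acyclic coloring. By \refthm{thm:AcyclicTree}, an optimal acyclic coloring $\phi$ of $G$ uses exactly $\chi_a(G) = \tw(G)+1$ colors. By \reflem{lem:acychar} (specifically the construction in the proof of (ii) $\Rightarrow$ (iii)), $\phi$ is a proper coloring of a triangulation $\Gt$ of $G$ whose cotree $\Tt$ is skew; in particular $\Gt$ has a cotree and is therefore itself a cograph, and being skew it has no induced cycle on four or more vertices by \reflem{lem:cyclesincotrees}, so $\Gt$ is a chordal cograph. Since $\phi$ properly colors $\Gt$ with $\tw(G)+1$ colors, every clique of $\Gt$ meets each color class at most once, whence $\omega(\Gt) \le \tw(G)+1$.

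It remains to argue that $\Gt$ is an interval graph, which is the crux of the proof and which recovers the Bodlaender--M\"ohring observation that the relevant triangulations of a cograph are intervalizations. By the classical characterization, a graph is an interval graph if and only if it is chordal and contains no asteroidal triple~\cite{Golumbic2004}; as $\Gt$ is already chordal, it suffices to show that no cograph contains an asteroidal triple. I would prove this by induction over the cotree. If the cograph is a disjoint union, any asteroidal triple, whose three vertices are pairwise joined by paths, must lie inside a single connected component, and we recurse into the corresponding smaller cograph. If the cograph is a join $G_1 \join G_2$, then the three pairwise non-adjacent vertices of a putative asteroidal triple must all lie in one part, say $G_1$, since any two vertices from different parts are adjacent; moreover every vertex of $G_2$ is adjacent to all three of them, so none of the witnessing paths---each of which must avoid the closed neighborhood of one of the three vertices---can use a vertex of $G_2$. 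Hence the three vertices already form an asteroidal triple of $G_1$, and we recurse into a strictly smaller cograph. Since an asteroidal triple requires three vertices, this descent reaches a contradiction before the cotree bottoms out.

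Combining these steps, $\Gt$ is an interval supergraph of $G$, i.e.\ an intervalization with $\omega(\Gt) \le \tw(G)+1$, so $\pw(G) \le \tw(G)$ and therefore $\tw(G) = \pw(G)$. The main obstacle is the interval-graph claim of the third paragraph: the rest is bookkeeping with results already in hand, whereas the heart of the matter is showing that chordal cographs are interval graphs, and in particular the join case, where one must verify that the universal vertices of the opposite part genuinely block every neighborhood-avoiding path and thereby force the asteroidal triple down into a single part.
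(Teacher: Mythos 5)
Your proposal is correct, but it is worth being clear that the paper contains no proof of this statement at all: \refthm{thm:TreePath} is imported wholesale from Bodlaender and M\"ohring~\cite{BM93}, accompanied only by the gloss that they proved it ``by showing that every triangulation of a cograph $G$ is also an intervalization of $G$,'' and the remark after \refcor{cor:AcyStarTwPw} notes that the original argument amounts to running parallel recurrences for treewidth and pathwidth over the cotree, in the style of \reflem{lem:acyStarJoinDisj}. Your route is genuinely different and self-contained within the paper's machinery: you take the $\phi$-compatible triangulation $\Gt$ produced by the skew-cotree construction in \reflem{lem:acychar} from an optimal acyclic coloring, observe $\omega(\Gt) \leq \chi_a(G) = \tw(G)+1$ via \refthm{thm:AcyclicTree} (no circularity here, since \refthm{thm:AcyclicTree} and \reflem{lem:acychar} are proved independently of \refthm{thm:TreePath}), and then supply the crux the paper never proves: that a chordal cograph is an interval graph, via the Lekkerkerker--Boland characterization~\cite{Golumbic2004} and an induction showing cographs have no asteroidal triple. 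Both induction cases are sound --- in a disjoint union the triple and its witnessing paths must lie in one component, and in a join $G_1 \join G_2$ pairwise non-adjacency forces the triple into one side while the universality of the other side places all its vertices inside every closed neighborhood, so no witnessing path can leave that side. A further point in your favor: your restriction to a triangulation that is \emph{itself a cograph} is not just convenient but necessary, since the paper's gloss is too strong as literally stated --- a triangulation of a cograph need not be a cograph or an interval graph (a subdivided claw is a tree, hence a chordal supergraph of the edgeless cograph on its vertex set, yet it contains an asteroidal triple). What the BM93 recurrence approach buys is brevity and no structural detour; what your approach buys is the structural fact that chordal cographs are exactly the interval cographs, which meshes naturally with the triangulation viewpoint of Section~\ref{sec:triangulations}.
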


Combining \refcor{cor:StarAcyclic}, \refthm{thm:AcyclicTree}, and \refthm{thm:TreePath} we obtain the following result.
\begin{cor}\label{cor:AcyStarTwPw}
  For every cograph $G$, $\chi_s(G) = \chi_a(G) = \tw(G)+1 = \pw(G)+1$.
\end{cor}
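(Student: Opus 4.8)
The plan is to chain together the three previously established equalities by transitivity; since $G$ is a cograph, each ingredient applies and the corollary follows immediately. First I would invoke \refcor{cor:StarAcyclic} to obtain $\chi_s(G) = \chi_a(G)$, which identifies the star and acyclic chromatic numbers. Next, \refthm{thm:AcyclicTree} supplies $\chi_a(G) = \tw(G)+1$, tying the acyclic chromatic number to the treewidth. Finally, \refthm{thm:TreePath} gives $\tw(G) = \pw(G)$, and adding $1$ to both sides yields $\tw(G)+1 = \pw(G)+1$.

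Combining these three facts by the transitivity of equality produces the single chain $\chi_s(G) = \chi_a(G) = \tw(G)+1 = \pw(G)+1$, which is exactly the claimed statement.

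The only point requiring care is that each of the cited results is a statement about cographs rather than about arbitrary graphs, so I would keep the hypothesis that $G$ is a cograph in force throughout the argument; this is harmless since it is precisely the setting of the corollary. I do not anticipate any genuine obstacle here: all of the substantive work has already been carried out in \refcor{cor:StarAcyclic}, \refthm{thm:AcyclicTree}, and \refthm{thm:TreePath}, and the present proof is purely a matter of stringing those established equalities together.
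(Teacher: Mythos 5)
Your proof is correct and matches the paper's own argument exactly: the paper likewise obtains this corollary by combining \refcor{cor:StarAcyclic}, \refthm{thm:AcyclicTree}, and \refthm{thm:TreePath} via transitivity of equality. Nothing further is needed.
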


We note that \refcor{cor:AcyStarTwPw} also follows from
an inductive application of \reflem{lem:acyStarJoinDisj}.
However, the results of the next section,
as well as the proof of correctness of the algorithm given in \refsec{sec:algorithm},
rely on the intermediate results used in the proof we have given here.

\subsection{Triangulations of colored graphs and the perfect phylogeny problem}
Let $G$ be a graph given with a proper coloring $\phi$.
We say that $G$ is \emph{$\phi$-triangulatable} if there exists
a triangulation $\Gt$ of $G$ such that $\phi$ is a proper coloring of $\Gt$.
In the general case, determining whether $G$ is $\phi$-triangulatable
is $\NP$-complete~\cite{BFHWW2000}.
This is known as the \emph{triangulating colored graphs} problem,
which is polynomially equivalent to
the perfect phylogeny problem from evolutionary biology~\cite{KW1992}.
The following result follows immediately from \reflem{lem:acychar}, which
characterizes the colorings $\phi$ for which a cograph $G$ is $\phi$-triangulatable as
exactly the acyclic colorings of $G$.
(Note that we can check in polynomial time whether $\phi$ is an acyclic coloring of $G$,
and the procedure described in the proof of \reflem{lem:acychar} can be used to
obtain a compatible triangulation in case one is desired.)
\begin{cor}
  There exists a polynomial-time algorithm that,
  given a cograph $G$ along with a proper coloring $\phi$ of $G$,
  determines whether $G$ is $\phi$-triangulatable.
\end{cor}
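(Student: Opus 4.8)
The plan is to observe that $\phi$-triangulatability coincides with a condition already characterized in \reflem{lem:acychar}, so that the whole task reduces to a polynomial-time acyclicity check. First I would note that, by definition, $G$ is $\phi$-triangulatable if and only if $\phi$ is a proper coloring of some triangulation of $G$, which is exactly condition (iii) of \reflem{lem:acychar}. Since the hypothesis guarantees that $\phi$ is proper, \reflem{lem:acychar} tells us that condition (iii) holds if and only if condition (i) holds, i.e., if and only if $\phi$ is an acyclic coloring of $G$. Thus the algorithm need only decide whether the given proper coloring $\phi$ is acyclic, answering ``yes'' precisely in that case.

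The remaining step is to verify that acyclicity can be tested in polynomial time. Here I would appeal to \refobs{obs:acystarchar}: a proper coloring is acyclic exactly when every cycle of $G$ uses at least three colors, equivalently when the union of each pair of color classes induces a forest. Because $\phi$ assigns at most $n$ distinct colors, there are $O(n^2)$ such pairs, and for each pair one can decide acyclicity of the induced subgraph in linear time, e.g., by a depth-first search or simply by comparing its edge count to its vertex count. Summing over all pairs gives a running time polynomial in the size of $G$, as required.

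If an explicit triangulation is also wanted, I would route the test through a cotree: compute a cotree $T$ of $G$ in linear time, determine by a single bottom-up pass which nodes are saturated by $\phi$ (a node $\tau$ is saturated exactly when the colors on $V_\tau$ are pairwise distinct), and check condition (ii) of \reflem{lem:acychar} by confirming at each \OneNode-node that at most one child is unsaturated. When this succeeds, the construction from the proof of \reflem{lem:acychar}, which promotes every \ZeroNode-node below a saturated node to a \OneNode-node, yields a compatible triangulation directly.

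I do not expect a genuine obstacle here, since \reflem{lem:acychar} already performs all of the structural work; the one point needing care is the bookkeeping that keeps the acyclicity test within polynomial time, and that is routine.
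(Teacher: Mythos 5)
Your proposal is correct and takes essentially the same route as the paper, which likewise derives the corollary immediately from the equivalence (i)$\Leftrightarrow$(iii) of \reflem{lem:acychar}, remarks that acyclicity of a proper coloring can be checked in polynomial time, and points to the construction in that lemma's proof for producing an explicit triangulation when one is desired. One small caution: your shortcut of ``comparing its edge count to its vertex count'' certifies a forest only if applied per connected component (a triangle plus isolated vertices has $m < n$ yet contains a cycle), though your depth-first search alternative handles this correctly.
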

\reffig{fig:cographAndCotreesTriangulated} illustrates this concept for
the graph depicted in \reffigpart{fig:cographAndCotrees}{a}.
\begin{figure}[bht]
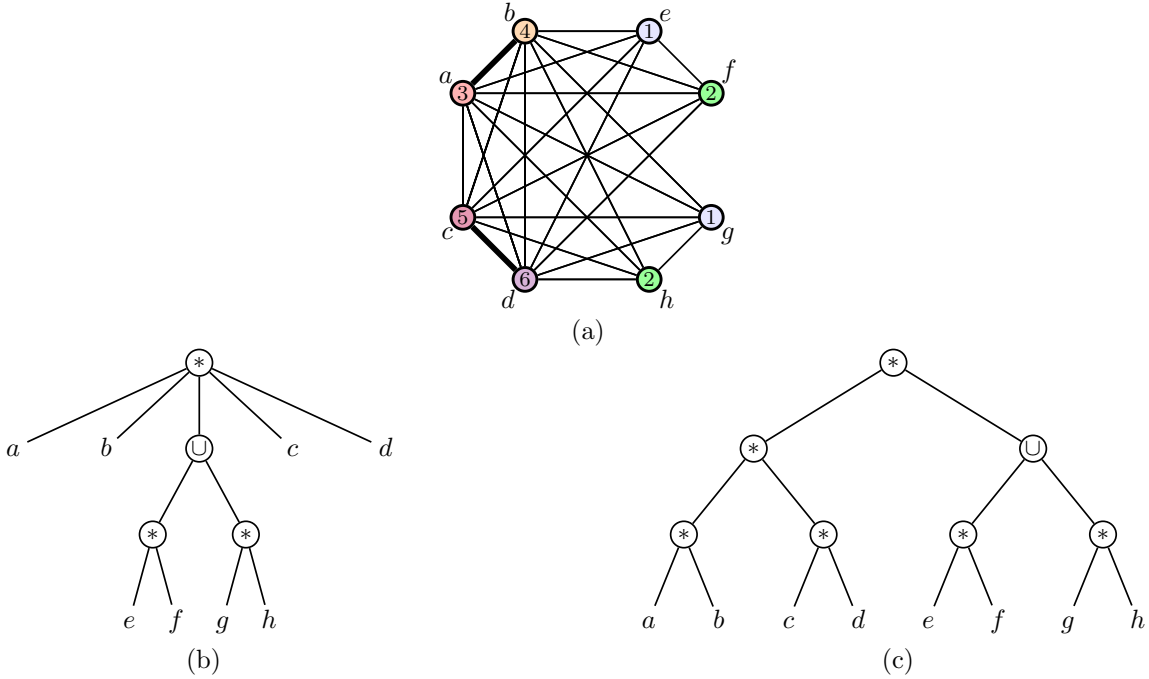
 \centering
  \figCographColoredTriangulated{0.25\textwidth}{\normalsize}{\footnotesize} \\
  (a) \\
  \begin{tabular*}{0.95\textwidth}{@{\extracolsep{\fill}}cc}
    \figCotreeCanonTriangulated{0.45\textwidth}{0.15\textheight}{\normalsize} &
    \figCotreeBinaryTriangulated{0.45\textwidth}{0.15\textheight}{\normalsize} \\
    (b) & (c)
  \end{tabular*}
  \caption{(a) A triangulation $\Gt$ of the graph $G$ depicted in \reffigpart{fig:cographAndCotrees}{a}
               (bold edges added during the triangulation process), which
               satisfies $\chi_a(G) = \chi(\Gt) = \omega(\Gt) = 6;$
           (b) the canonical cotree for $\Gt$;
           (c) the binary cotree $\Tt$ for $\Gt$ obtained from
               the binary cotree depicted in \reffigpart{fig:cographAndCotrees}{c}.}
  \label{fig:cographAndCotreesTriangulated}
\end{figure}

\section{The algorithm for acyclic and star coloring}\label{sec:algorithm}
In this section we prove the following theorem.
\begin{thm}\label{thm:algWorks}
  There exists an algorithm that, given a binary cotree $T$ for cograph $G$,
  produces an optimal acyclic and star coloring of $G$ in $O(n)$ time.
  Moreover, the obtained coloring is also an optimal star coloring.
\end{thm}
As mentioned in \refsec{sec:cographs},
an arbitrary cotree can be transformed into a binary cotree in $O(n)$ time.
If $G$ is given in the form of an adjacency list,
we simply build a cotree of $G$ (in $O(n+m)$ time~\cite{CPS1985})
before running the algorithm,
which results in an overall running time of $O(n+m)$.
We now give an informal description of the algorithm, which consists of two phases.

In the first phase, we traverse the cotree from the leaves to the root,
computing for every $\tau$ in $T$ the values $\card{V_\tau}$ and $\chi_a(G_\tau)$
in accordance with \reflem{lem:acyStarJoinDisj}.

The second phase, in which an optimal acyclic and star coloring $\phi$ is constructed,
consists of a top-down traversal of the cotree.
Recall that, by \reflem{lem:acyStarJoinDisj},
for every \OneNode-node in $T$ with children $\alpha$ and $\beta$,
either $\alpha$ or $\beta$ must be saturated by $\phi$,
meaning that the vertices associated with leaves in the corresponding subtree
are assigned pairwise-distinct colors.
In order to decide which subtree is to be saturated,
the algorithm makes use of the values
$\card{V_\alpha}$, $\card{V_\beta}$, $\chi_a(G_\alpha)$, and $\chi_a(G_\beta)$,
which are all computed during the first phase.

\subsection{Phase I: computing \texorpdfstring{the quantity $\chi_a(G)=\chi_s(G)$}
                                              {the acyclic and star chromatic number}}
\algCOMPUTEAC
Recall that the initial phase of our main algorithm,
which is shown in \refalg{alg:phaseI},
computes the quantity $\chi_a(G_\tau) = \chi_s(G_\tau)$ for
every node $\tau$ in the cotree.
If one desires only the quantity $\chi_a(G) = \chi_s(G)$---but
not necessarily a corresponding coloring $\phi$ that achieves
this---then \refalg{alg:phaseI} is all that is required.
\begin{lem}\label{lem:phaseI}
  \refalg{alg:phaseI} is correct and runs in $O(n)$ time.
\end{lem}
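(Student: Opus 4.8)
The plan is to establish correctness by structural induction on the cotree $T$ and to bound the running time by counting the nodes visited. Since \refalg{alg:phaseI} is a recursive procedure whose recursion tree is exactly $T$ itself, both parts follow the same traversal, so I would set up the induction to track, for each node $\tau$, the claim that upon returning from \texttt{COMPUTEAC}$(\tau)$ the computed values $\card{V_\tau}$ and $\chi_a(G_\tau)$ are both correct.

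For the base case, a leaf $\tau$ corresponds to a single vertex of $G$, so $V_\tau$ is a singleton and $\chi_a(G_\tau) = 1$; the algorithm assigns exactly these values. For the inductive step, let $\tau$ be an internal node with children $\alpha$ and $\beta$, and assume by induction that the recursive calls correctly populate $\card{V_\alpha}$, $\card{V_\beta}$, $\chi_a(G_\alpha)$, and $\chi_a(G_\beta)$. The key structural fact, immediate from the definition of a cotree, is that the leaves below $\tau$ are precisely those below $\alpha$ together with those below $\beta$, and these two sets are disjoint; hence $\card{V_\tau} = \card{V_\alpha} + \card{V_\beta}$, matching the assignment. Moreover $G_\tau = G_\alpha \disj G_\beta$ when $\tau$ is a \ZeroNode-node and $G_\tau = G_\alpha \join G_\beta$ when $\tau$ is a \OneNode-node. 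In the first case, \reflem{lem:acyStarJoinDisj}(i) gives $\chi_a(G_\tau) = \max\{\chi_a(G_\alpha), \chi_a(G_\beta)\}$; in the second, part (iii) gives $\chi_a(G_\tau) = \min\{\card{V_\alpha} + \chi_a(G_\beta), \card{V_\beta} + \chi_a(G_\alpha)\}$. In each case this is exactly the value assigned by the algorithm, completing the induction.

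For the running time, I would observe that \texttt{COMPUTEAC} is invoked exactly once per node of $T$, since each internal node triggers one call on each of its two children. A binary cotree for a cograph on $n$ vertices has $n$ leaves and, being a tree in which every internal node has two children, at most $n-1$ internal nodes, for a total of $O(n)$ nodes. The work performed at each node outside the recursive calls is constant, namely a fixed number of additions and comparisons together with a single $\max$ or $\min$ of two quantities, so the total time is $O(n)$.

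I expect no serious obstacle here: the content of the lemma is essentially the bookkeeping statement that the recursion of \refalg{alg:phaseI} faithfully implements the recurrences of \reflem{lem:acyStarJoinDisj}. The one point that warrants care is verifying that $G_\tau$ decomposes as $G_\alpha \disj G_\beta$ or $G_\alpha \join G_\beta$ according to the label of $\tau$; this is exactly the defining property of cotrees and the recursive construction of \refdfn{dfn:cograph}, so it requires only a careful appeal to those definitions rather than any new argument.
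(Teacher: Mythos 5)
Your proof is correct and follows essentially the same route as the paper: correctness by induction over the cotree using \reflem{lem:acyStarJoinDisj} (with the leaf base case $\chi_a(G_\lambda)=1$), and the $O(n)$ bound via the one-to-one correspondence between recursive calls and cotree nodes. The paper states this more tersely, while you usefully spell out the decomposition $G_\tau = G_\alpha \disj G_\beta$ or $G_\tau = G_\alpha \join G_\beta$ according to the label of $\tau$, but the substance is identical.
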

\begin{proof}
  Correctness follows from an inductive application of \reflem{lem:acyStarJoinDisj}.
  Trivially, $\chi_a(G_\lambda) = 1$ for every leaf $\lambda$ of $T$.
  Observe that, when called on the root node $\rho$ of $T$,
  the procedure \COMPUTEAC{} correctly computes the quantity
  $\chi_a(G_\tau) = \chi_s(G_\tau)$ for every node $\tau$ in $T$.
  The correct value of the quantity $\chi_a(G) = \chi_s(G)$ is thus obtained
  as $\chi_a(G_\rho)$ when \COMPUTEAC{} is called on the root node $\rho$ of $T$.

  To establish the running time, observe that the recursive calls of \COMPUTEAC{} are
  in one-to-one correspondence with the nodes in $T$, of which there are $O(n)$.
\end{proof}

\subsection{Phase II: construction of an optimal coloring}
\algASSIGNCOLORS
The procedures for the second phase of our algorithm, which
require the values $\card{V_\tau}$ and $\chi_a(G_\tau)$ for every node $\tau$ in $T$
(as computed in the first phase),
are shown in \refalg{alg:phaseII}.
\begin{lem}\label{lem:phaseII}
  \refalg{alg:phaseII} is correct and runs in $O(n)$ time.
\end{lem}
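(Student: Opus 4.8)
The plan is to prove correctness via two invariants established by structural induction on the cotree, and then handle the running time by a simple charging argument. First I would dispatch the auxiliary procedure \textsc{Saturate}: a routine induction shows that a call \textsc{Saturate}$(\tau,K)$ assigns to the $\card{V_\tau}$ vertices of $V_\tau$ the pairwise-distinct colors $\{K,K+1,\ldots,K+\card{V_\tau}-1\}$. The base case (a leaf receiving color $K$) is immediate, and for an internal node the palettes produced by the two recursive calls $\textsc{Saturate}(\alpha,K)$ and $\textsc{Saturate}(\beta,K+\card{V_\alpha})$ partition $\{K,\ldots,K+\card{V_\tau}-1\}$, so all colors remain distinct. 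A fully saturated subtree is trivially colored acyclically, since a bichromatic cycle would require two vertices of a common color.

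The core of the argument is the main invariant for \textsc{AssignColors}, which I would state as: for every node $\tau$ and every positive integer $K$, the call $\textsc{AssignColors}(\tau,K)$ produces an acyclic coloring of $G_\tau$ that uses exactly the colors $\{K,\ldots,K+\chi_a(G_\tau)-1\}$. Applying this to the root $\rho$ with $K=1$ yields an acyclic coloring of $G$ using $\chi_a(G)$ colors, hence an optimal one. The induction has three cases. The leaf case is trivial. For a disjoint-union node, the recursive calls on $\alpha$ and $\beta$ both start at $K$ and, by the inductive hypothesis together with \reflem{lem:acyStarJoinDisj}(i), use colors inside $\{K,\ldots,K+\max\{\chi_a(G_\alpha),\chi_a(G_\beta)\}-1\}=\{K,\ldots,K+\chi_a(G_\tau)-1\}$; since there are no edges between $V_\alpha$ and $V_\beta$, every cycle of $G_\tau$ lies in one part and is colored acyclically by hypothesis. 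For a join node, the algorithm's comparison selects the branch realizing the minimum in \reflem{lem:acyStarJoinDisj}(iii): it saturates one child (say $\alpha$) with palette $\{K,\ldots,K+\card{V_\alpha}-1\}$ and calls $\textsc{AssignColors}(\beta,K+\card{V_\alpha})$, so the two palettes are disjoint and the total color count is $\card{V_\alpha}+\chi_a(G_\beta)=\chi_a(G_\tau)$.

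I expect the join case to be the main obstacle, specifically verifying that no bichromatic cycle is created when the two subcolorings are combined. Here I would reuse the argument from the proof of \reflem{lem:acyStarJoinDisj}(iii): because every vertex of $V_\alpha$ receives a distinct color and the $V_\alpha$- and $V_\beta$-palettes are disjoint, any color appearing on a vertex of $V_\alpha$ occurs exactly once in $G_\tau$ and therefore cannot participate in a bichromatic cycle. Hence any bichromatic cycle would have to be confined to $V_\beta$, contradicting the inductive hypothesis that the coloring restricted to $G_\beta$ is acyclic. This establishes acyclicity, and optimality of the color count follows directly from the matching lower bound in \reflem{lem:acyStarJoinDisj}. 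Finally, since $G$ is a cograph, \refthm{thm:StarAcyclic} guarantees that the coloring produced is also a star coloring, and \refcor{cor:StarAcyclic} shows it is an optimal one.

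For the running time I would argue that each node of the cotree is processed a constant number of times across both procedures. The nodes visited by \textsc{AssignColors} form a connected top-down frontier; at each join node the recursion descends into only one child while the other child is passed, once, to \textsc{Saturate}. Consequently the subtrees handed to \textsc{Saturate} are pairwise disjoint and disjoint from the remaining \textsc{AssignColors} recursion, so the total number of recursive calls is $O(n)$, and each call performs $O(1)$ work (the comparison of the precomputed quantities $\card{V_\alpha}$, $\card{V_\beta}$, $\chi_a(G_\alpha)$, $\chi_a(G_\beta)$ from \refalg{alg:phaseII}). This gives the claimed $O(n)$ bound and completes the proof.
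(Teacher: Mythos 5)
Your proof is correct, and it follows the paper's skeleton for everything except the acyclicity step, where it takes a genuinely different route. Shared with the paper: the analysis of \textsc{Saturate}$(\tau,K)$ as assigning the exact palette $\{K,\ldots,K+\card{V_\tau}-1\}$, the induction over cotree nodes that matches the color counts to the recurrences of \reflem{lem:acyStarJoinDisj}, the appeal to \refthm{thm:StarAcyclic} and \refcor{cor:StarAcyclic} for the star-coloring claim, and the $O(n)$ bound via an injection from recursive calls to cotree nodes. Where you diverge: the paper never re-argues acyclicity inductively. It observes that the coloring is proper (disjoint palettes at the children of every join node) and then invokes condition \emph{(ii)} of \reflem{lem:acychar} --- since \textsc{Saturate} is applied to one child of every join node, at most one child of each join node is unsaturated, so acyclicity follows in one line from that characterization. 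You instead prove acyclicity directly inside your induction by re-running, at each join node, the bichromatic-cycle argument from the proof of \reflem{lem:acyStarJoinDisj}(iii): a uniquely colored vertex cannot lie on a bichromatic cycle, because properness forces each of the two colors to appear at least twice on such a cycle. Both arguments are sound. The paper's route is shorter and is precisely why it remarks that the algorithm's correctness depends on the intermediate results of Section 3; your route shows that dependency is avoidable and yields a self-contained proof, at the cost of duplicating reasoning already present in \reflem{lem:acyStarJoinDisj}. Your invariant (the call \textsc{AssignColors}$(\tau,K)$ uses \emph{exactly} the palette $\{K,\ldots,K+\chi_a(G_\tau)-1\}$) is also slightly sharper than what the paper states, and it cleanly packages properness, acyclicity, and optimality into a single inductive claim.
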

\begin{proof}
  Let $\rho$ be the root of the given binary cotree $T$.
  We will show that the coloring $\phi$ obtained by calling \ASSIGNCOLORS{$\rho$, $1$}
  is an optimal acyclic (and star) coloring of $G$.

  Recall that two distinct vertices in $G$ are adjacent if and only if
  the lowest common ancestor of the corresponding leaves in $T$ is a \OneNode-node.
  It follows that any coloring that assigns disjoint sets of colors to the children
  of every \OneNode-node will be a proper coloring.
  It is easy to see that both \SATURATE{} and \ASSIGNCOLORS{} satisfy this condition,
  hence $\phi$ is a proper coloring of $G$,
  which means that we may apply \reflem{lem:acychar} to show that
  $\phi$ is an acyclic coloring of $G$.
  To this end, observe that the procedure \SATURATE{} behaves exactly as expected:
  the result of calling \SATURATE{$\tau$, $K$} is that the node $\tau$ is
  saturated by $\phi$ with the colors $\{K,\ldots,\card{V_\tau}-1\}$.
  Because \SATURATE{} is called on one of the two children of every \OneNode-node,
  we see that $\phi$ satisfies condition (ii) of \reflem{lem:acychar},
  from which it follows that $\phi$ is an acyclic coloring as desired.

  It remains to be shown that $\phi$ uses exactly $\chi_a(G)$ colors.
  Our proof of this fact is based upon an inductive argument concerning the
  recursive calls \ASSIGNCOLORS{$\tau$, $K$}.
  Trivially, when $\tau$ is a leaf, the induced subgraph $G_\tau$ is colored
  using $\chi_a(G_\tau) = 1$ colors.
  Now suppose $\tau$ is an internal node with children $\alpha$ and $\beta$.
  If $\tau$ is a \ZeroNode-node, and
  the calls \ASSIGNCOLORS{$\alpha$, $K$} and \ASSIGNCOLORS{$\beta$, $K$}
  result in optimal acyclic colorings of $G_\alpha$ and $G_\beta$,
  then, by \reflem{lem:acyStarJoinDisj},
  the induced subgraph $G_\tau$ is colored using
  $\chi_a(G_\tau) = \max\bigl\{\chi_a(G_\alpha),\chi_a(G_\beta)\bigr\}$ colors.
  Now suppose $\tau$ is a \OneNode-node and assume without loss of generality that
  $\card{V_\alpha}+\chi_a(G_\beta) \leq \card{V_\beta}+\chi_a(G_\alpha)$.
  If the call \ASSIGNCOLORS{$\beta$, $K+\card{V_\alpha}$}
  results in an optimal acyclic coloring of $G_\beta$,
  then, by \reflem{lem:acyStarJoinDisj},
  the induced subgraph $G_\tau$ is colored using
  $\chi_a(G_\tau) = \card{V_\alpha}+\chi_a(G_\beta)$ colors.

  As in the proof of \reflem{lem:phaseI}, we establish the running time via a
  one-to-one correspondence between recursive procedure calls and nodes in $T$.
  In this case, each node corresponds to either a call of \ASSIGNCOLORS{}
  or a call of \SATURATE{}.
\end{proof}

\section{Concluding Remarks}\label{sec:conc}

\refthm{thm:acyTW} implies a natural heuristic for the acyclic coloring problem:
simply find a triangulation $\Gt$ of $G$ that is close to optimal (with respect to treewidth),
and then compute an optimal proper coloring of $\Gt,$ using $O(n+m)$ time~\cite{Golumbic2004}.
Here we use the fact that treewidth is a particularly well-studied parameter,
and there are many heuristics, approximation algorithms,
exact (exponential) algorithms, and polynomial time algorithms for many classes of graphs~\cite{BodlaenderR03,Fomin2008,Kloks94}.
In particular, for a constant $k$ there is a linear-time algorithm for determining whether the treewidth
of a graph is at most $k$ and, if so, finding a corresponding triangulation~\cite{Bodlaender96}.

Furthermore, \reflem{lem:acyStarJoinDisj} applies to any graph that is decomposable with respect to the join operation,
and so it may be used as a reduction step that should be applied as the first step of any heuristic.
Moreover, \reflem{lem:acyStarJoinDisj} implies that we can also find an optimal acyclic
or star coloring of any graph for which these problems can be solved on all the graphs that result
from recursively applying the join decomposition.
For example, the \emph{tree-cographs}~\cite{Tinhofer1989}
are those graphs that result by taking disjoint unions and joins of trees or other tree-cographs.
The class of cographs is properly contained within this class.
Since it is trivial to find an optimal acyclic or star coloring of a tree in linear time~\cite{FRR2004},
it follows that we can solve these problems in linear time on the entire class of tree-cographs.

In the proof of \reflem{lem:acychar},
we were able to add at least one edge to every induced cycle on four vertices in $G$
(which was given along with an acyclic coloring)
such that no new induced cycles were created.
However, one can easily construct an example for general graphs where this is not the case.
Furthermore, there are graphs $G$ with acyclic colorings $\phi$ for which $G$ cannot be $\phi$-triangulated.
Two minimal examples are shown in \reffig{fig:coloredCantTriangulate}.
\begin{figure}[bht]
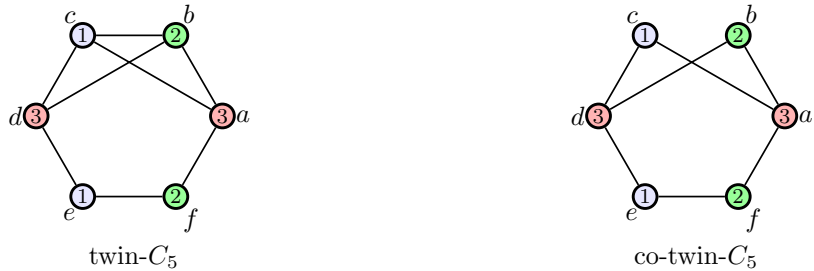
 \centering
  \begin{tabular}{ccc}
    \figCoTwinCfive{0.15\textwidth}{\normalsize}{\footnotesize} &
    \hspace{0.2\textwidth} &
    \figTwinCfive{0.15\textwidth}{\normalsize}{\footnotesize} \\
    twin-$C_5$ & & co-twin-$C_5$
  \end{tabular}
  \caption{Two graphs, each given with acyclic coloring $\phi$ such that neither can be $\phi$-triangulated.
           In the graph on the left, we cannot add an edge incident on $d$ or $a$ without creating a bichromatic cycle
           or violating the condition that the coloring is proper.
           Therefore, the cycles induced by $\{b,d,e,f,a\}$ and $\{c,d,e,f,a\}$ must be triangulated
           by adding edges $\{b,e\}$ and $\{c,f\},$ respectively.
           This results in $\{c,b,e,f\}$ inducing a bichromatic cycle.
           Note that edge $\{c,b\}$ must be added in any triangulation of the graph on the right,
           which reduces the problem to that of the graph on the left.}
  \label{fig:coloredCantTriangulate}
\end{figure}

In \reflem{lem:acychar} we proved the equivalence of the acyclic coloring and treewidth problems for
cographs by showing that every acyclic coloring of a cograph $G$ is a proper coloring of some triangulation of $G.$
It would be useful to prove similar results for other classes of graphs;
it is natural to consider other classes for which the treewidth problem can be solved in polynomial time.

\section*{Acknowledgments}
The author wishes to thank the anonymous referees, whose
suggestions greatly improved the presentation of this paper.
The majority of this work was conducted when the author was employed by
The University of Chicago and Argonne National Laboratory,
supported in part by U.S. Department of Energy under Contract DE-AC02-06CH11357.

\bibliographystyle{abuser} 
\bibliography{ColoringCographs}

\end{document}